\newenvironment{proof}{\noindent\textbf{Proof\ }}{\hfill\rule{3mm}{3mm}}
\newtheorem{theorem}{Theorem}
 \definecolor{BLACK}{gray}{0}
 \definecolor{WHITE}{gray}{1}
 \definecolor{RED}{rgb}{1,0,0}
 \definecolor{GREEN}{rgb}{0,1,0}
 \definecolor{BLUE}{rgb}{0,0,1}
 \definecolor{CYAN}{cmyk}{1,0,0,0}
 \definecolor{MAGENTA}{cmyk}{0,1,0,0}
 \definecolor{YELLOW}{cmyk}{0,0,1,0}
\DeclareMathOperator{\tr}{tr}
\newcolumntype{C}[1]{>{\centering\arraybackslash$}p{#1}<{$}}
\begin{document}

\widetext

\title{Generic detection-based error-mitigation using quantum autoencoders}

\author{Xiao-Ming Zhang}
\affiliation{Department of Physics, City University of Hong Kong, Tat Chee Avenue, Kowloon, Hong Kong SAR, China}

\author{Weicheng Kong}
\affiliation{Origin Quantum Computing Company Limited, Hefei, Anhui, 230088, China}

\author{Muhammad Usman Farooq}
\affiliation{Department of Mathematics, City University of Hong Kong, Tat Chee Avenue, Kowloon, Hong Kong SAR, China}

\author{Man-Hong Yung}
\affiliation{Shenzhen Institute for Quantum Science and Engineering and Department of Physics, Southern University of Science and Technology, Shenzhen 518055, China}
\affiliation{Shenzhen Key Laboratory of Quantum Science and Engineering, Southern University of Science and Technology, Shenzhen, 518055, China}

\author{Guoping Guo}
\affiliation{CAS Key Laboratory of Quantum Information, University of Science and Technology of China, Hefei, Anhui, 230026, China}

\author{Xin Wang}
\email{x.wang@cityu.edu.hk}
\affiliation{Department of Physics, City University of Hong Kong, Tat Chee Avenue, Kowloon, Hong Kong SAR, China}
\affiliation{City University of Hong Kong Shenzhen Research Institute, Shenzhen, Guangdong 518057, China}

\begin{abstract} 

Efficient error-mitigation techniques demanding minimal resources is key to quantum information processing.
We propose a generic protocol to mitigate quantum errors using detection-based quantum autoencoders. In our protocol, the quantum data are compressed into a latent subspace while leaving errors outside, the latter of which is then removed by a measurement and post-selection. Compared to previously developed methods, our protocol on the one hand requires no extra qubits, and on the other hand has a near-optimal denoising power, in which under reasonable requirements all errors detected outside of the latent subspace can be removed, while those inside the subspace cannot be removed by any means.  Our detection-based quantum autoencoders are therefore particularly useful for near-term quantum devices in which controllable qubits are limited while noise reduction is important.
\end{abstract}
\maketitle

\section{Introduction}
 Mitigating errors are key to quantum information processing. Among the many techniques developed for this purpose, the concept of subspace is ubiquitous. In quantum error correction \cite{Nielsen.02}, certain stabilizers define a subspace where the quantum states are verified: corrupted states, detected by syndrome measurements as outside of the subspace, are corrected by recovery operations. Alternatively, quantum computation can be conducted in decoherence-free subspaces \cite{Altepeter.04,Xue.06,Fong.11,Friesen.17}, which are chosen to decouple completely from certain environmental noises and thereby protecting the desired operations.

It is expected that in the noisy intermediate-scale quantum (NISQ) era \cite{Preskill.18}, quantum algorithms \cite{Peruzzo.14,Kandala.17,Farhi.14,Farhi.16,McArdle.20} can be successfully run on about 50-100 qubits. Nevertheless, quantum error correction on the algorithms at this scale requires a much larger number of controllable qubits, posing a technological challenge. On the other hand, decoherence-free subspaces only exist for selective sources of error, and their power in mitigating noises is limited. This signifies the need for a more generic method to reduce different types of errors within a limited number of controllable qubits.  

In classical data processing, data compression plays a central role in noise reduction.
In principal component analysis \cite{Wold.87}, only the first few principal components of the data with the greatest signal-to-noise ratio are kept. Another  example is an autoencoder \cite{Vincent.08,Vincent.10}, a deep neural network with bottleneck layers in its center. Data fed to the network are processed by the bottleneck layers with substantially smaller numbers of neurons, before they are restored by the remaining layers to the original size. In both cases, error-mitigation is achieved by first compressing the data into a subspace while keeping most errors outside, and then recovering the data from the subspace.

These achievements have inspired applications of data compression to quantum information processing \cite{Romero.17,Wan.17,Zhao.19,Beer.20,Achache.20,Cao.20}. For example, a quantum autoencoder to compress data involving quantum states has been developed in \cite{Romero.17}. Its key ingredient is the support subspace of a set of density matrices $R$, defined as the vector space spanned by the eigenvectors with non-zero eigenvalues for all the density matrices in $R$,
that is typically smaller than the full Hilbert space. This quantum data compression is sometimes quite efficient, as has been demonstrated experimentally in circuit QED \cite{Pepper.19} and linear optics \cite{Huang.19}. Despite these exciting advances, it remains unclear how to denoise using quantum data compression.

Very recently, neural-network-based quantum autoencoders were proposed to denoise quantum data \cite{Bondarenko.19}. It was shown that for various types of states including GHZ-like states, W states, graph states, etc., different kinds of noises can be satisfactorily suppressed without fine-tuning the hyperparameters. Nevertheless, the input data have to entangle with hidden layers, requiring additional qubits which could be technologically challenging in NISQ devices.
Moreover, other techniques developed for near-term quantum devices, such as extrapolation \cite{Li.17,Temme.17}, constraining \cite{McClean.16,Ryabinkin.18}, and a stabilizer-like method \cite{SMcArdle.19} are typically specialized to certain problems and/or types of errors, which is not generic. 
These and other considerations necessitate a denoising method that is more generic and at the same time does not require additional qubits.

In this work, we develop a generic quantum error-mitigation protocol by combining a post-selection process with the autoencoder proposed in~\cite{Romero.17}.  On the one hand it requires no extra qubits and is therefore suitable for NISQ devices. On the other hand, the method is quite general: under reasonable requirements (see Sec.~\ref{subsec:req}), all errors detected outside of the support subspace can be removed, while those inside the support subspace cannot be removed by other means neither. The technique can be generically applied to different types of errors including global depolarization noise \cite{Boixo.18,Arute.19} and decay-type noise \cite{Ofek.16}.
The validity of our scheme is examined with W class states and others with local and global depolarization noises. For $4$-qubit W class states, we find that a two-stage compression method is more effective, where the input data are first compressed into a $3$-qubit subspace and then a $2$-qubit subspace. We believe that this error-mitigation method can be straightforwardly generalized to treat more complicated problems, and implemented on different experimental platforms.

\section{Quantum autoencoder} Our discussion follows the idea of the quantum autoencoder proposed in~\cite{Romero.17} that does not require extra qubits in the compression: when a given set of quantum data share a certain underlying structure, one may find a single unitary that can ``compress'' the data from the full Hilbert space into a subspace. 
More precisely, we consider a set of quantum states (represented by density matrices) $R=\{\rho_1,\rho_2\cdots\}$  in the Hilbert space $\mathcal{H}$. The underlying structure shared is manifested as the support, $\mathcal{S}$, of the set of states $R$, with $\dim\mathcal{S}<\dim\mathcal{H}$.
We then define two subspaces of $\mathcal{H}$: the latent subspace $\mathcal{L}$ and the junk subspace $\mathcal{J}$, where $\mathcal{H}=\mathcal{L}\oplus\mathcal{J}$. The latent subspace $\mathcal{L}$ is spanned by orthogonal bases $\{|L_1\rangle,|L_2\rangle,\cdots,|L_{\text{dim}\, \mathcal{L}}\rangle\}$, while the junk subspace $\mathcal{J}$ is spanned by the orthogonal bases $\{|J_1\rangle,|J_2\rangle,\cdots,|J_{\text{dim}\,\mathcal{J}}\rangle\}$.
The key to a quantum encoder is to find an encoding unitary $U_{\text{e}}$, such that for all $\rho\in R$,
 \begin{equation}
\sigma \equiv U_e\rho U_e^\dag=\sum_{i=1}^{\dim\mathcal{L}}p_i|\psi_i\rangle\langle\psi_i|,
 \end{equation}
 where
$|\psi_i\rangle \in \mathcal{L}$ (see~\cite{sm} for more details). 
\begin{figure} [!htbp]
\includegraphics[width=0.95\columnwidth]{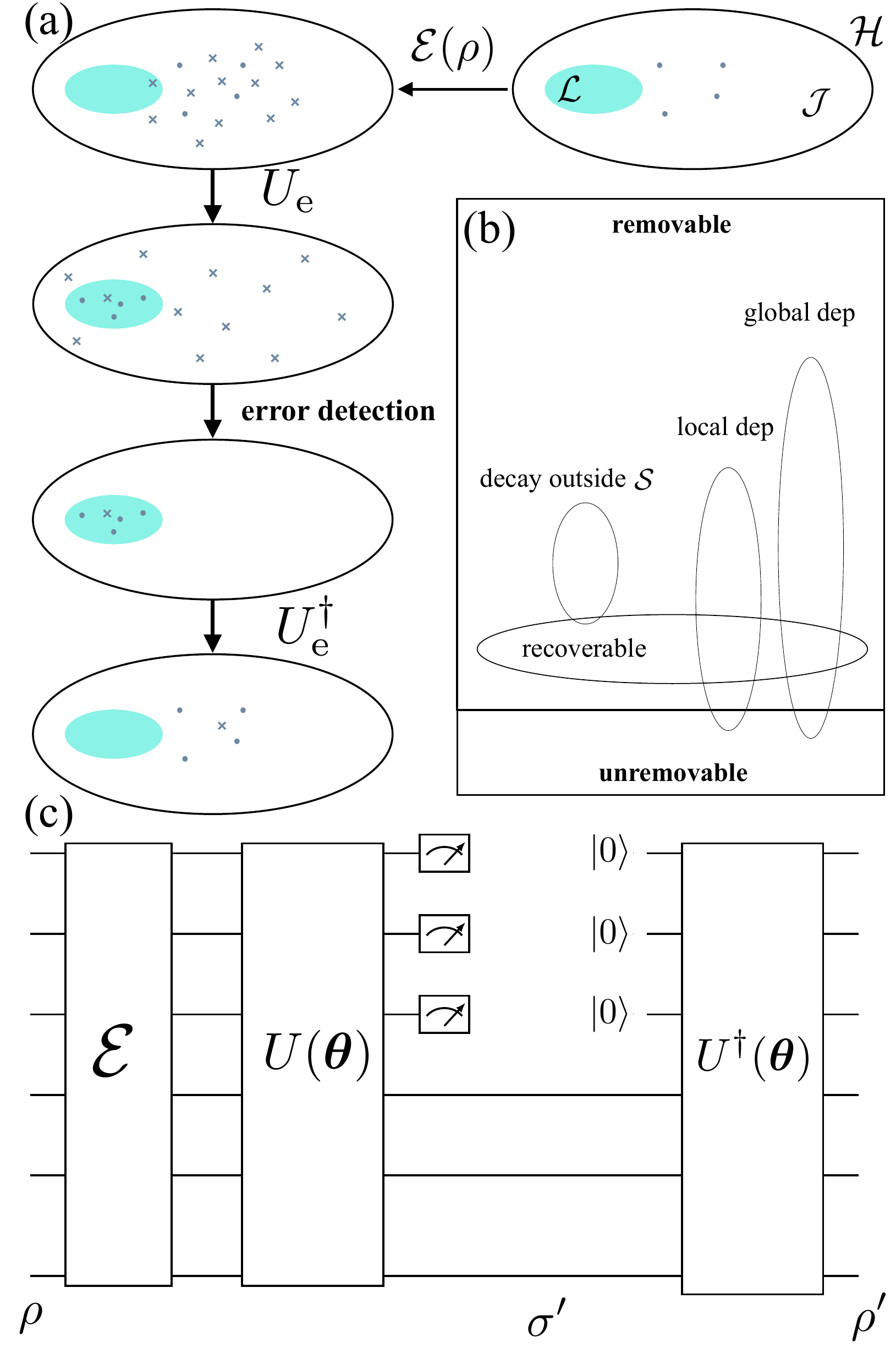}
\caption{(a) Sketch of the error detection and mitigation process. The large oval indicates the full Hilbert space $\mathcal{H}$, the green ellipse inside represents the latent subspace $\mathcal{L}$, while the remainder is the junk subspace $\mathcal{J}$. The noise effect is described by a quantum channel $\mathcal{E}(\cdot)$ as shown in Eq.~\eqref{eq:nc}. We use dots to represent the error-free term $(1-\varepsilon)\rho$, and crosses denote the error term $\varepsilon\rho^{\text{err}}$. $U_\text{e}$ transfers the error-free term  to $\mathcal{L}$, while most of the errors remain in $\mathcal{J}$. A measurement projects the state to the latent subspace, which detects and removes the errors in $\mathcal{J}$. Finally, $U_\text{e}^\dag$ is applied to recover the quantum data with error-mitigated. (b) Relations between different types of errors. Errors removable by our method include but are not limited to decay-type errors outside $\mathcal{S}$, and a large portion of local and global depolarization (denoted as ``dep") errors. All errors recoverable by quantum operations are also removable by our method. (c) The detection-based quantum autoencoder in the  circuit form.}
\label{fig:enc}
\end{figure}

 \begin{figure} [tbp]
\includegraphics[width=\columnwidth]{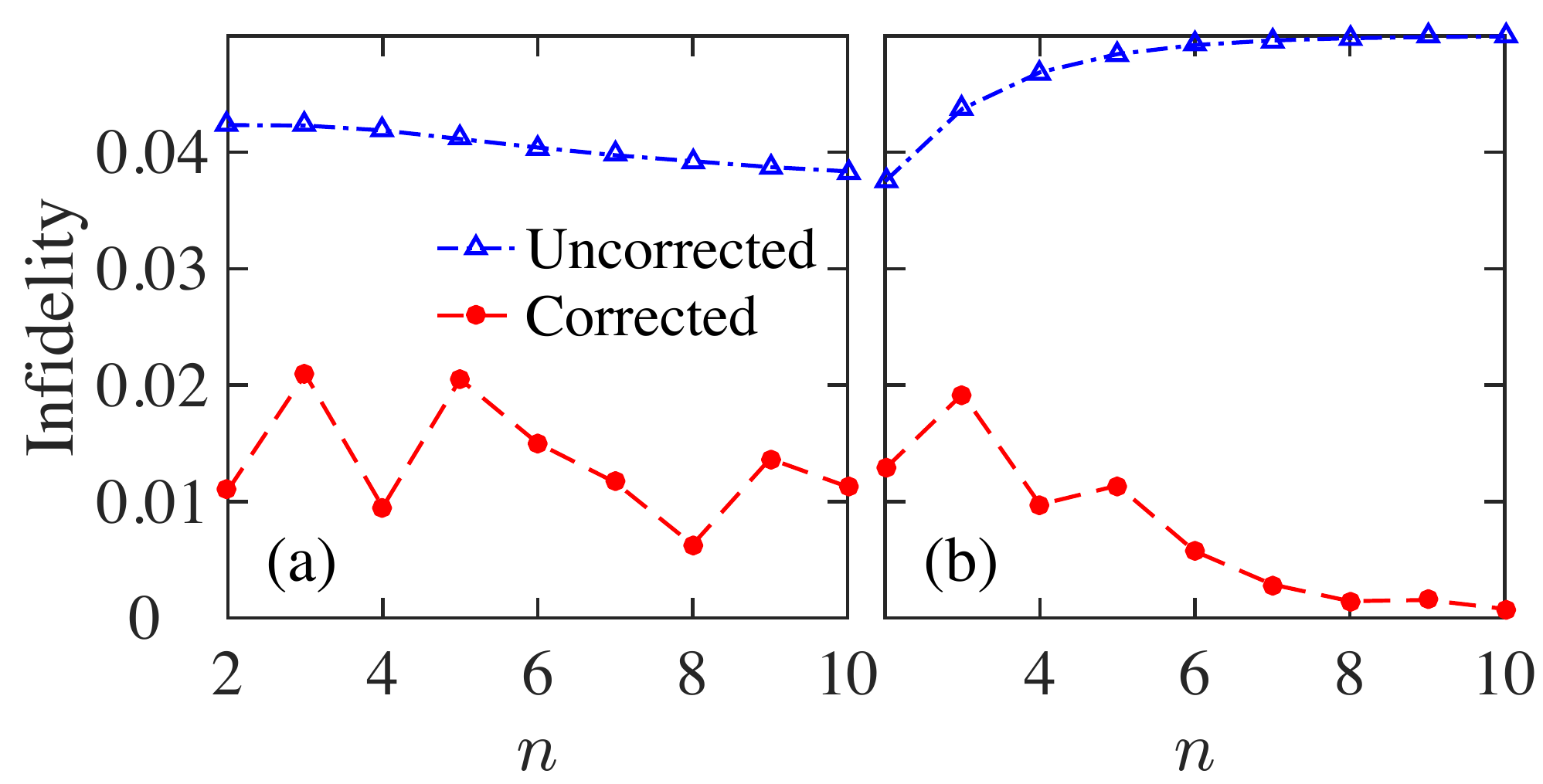}
\caption{ Performance of a detection-based quantum autoencoder for W class states with forms of input states known, under (a) local depolarization noise $\widetilde{\mathcal{E}}_{\text{lc}}(\cdot)$ and (b) global depolarization noise $\widetilde{\mathcal{E}}_{\text{gl}}(\cdot)$. We set $\varepsilon=0.05$.}
\label{fig:w}
\end{figure}

\begin{figure} [!htbp]
\includegraphics[width=0.8\columnwidth]{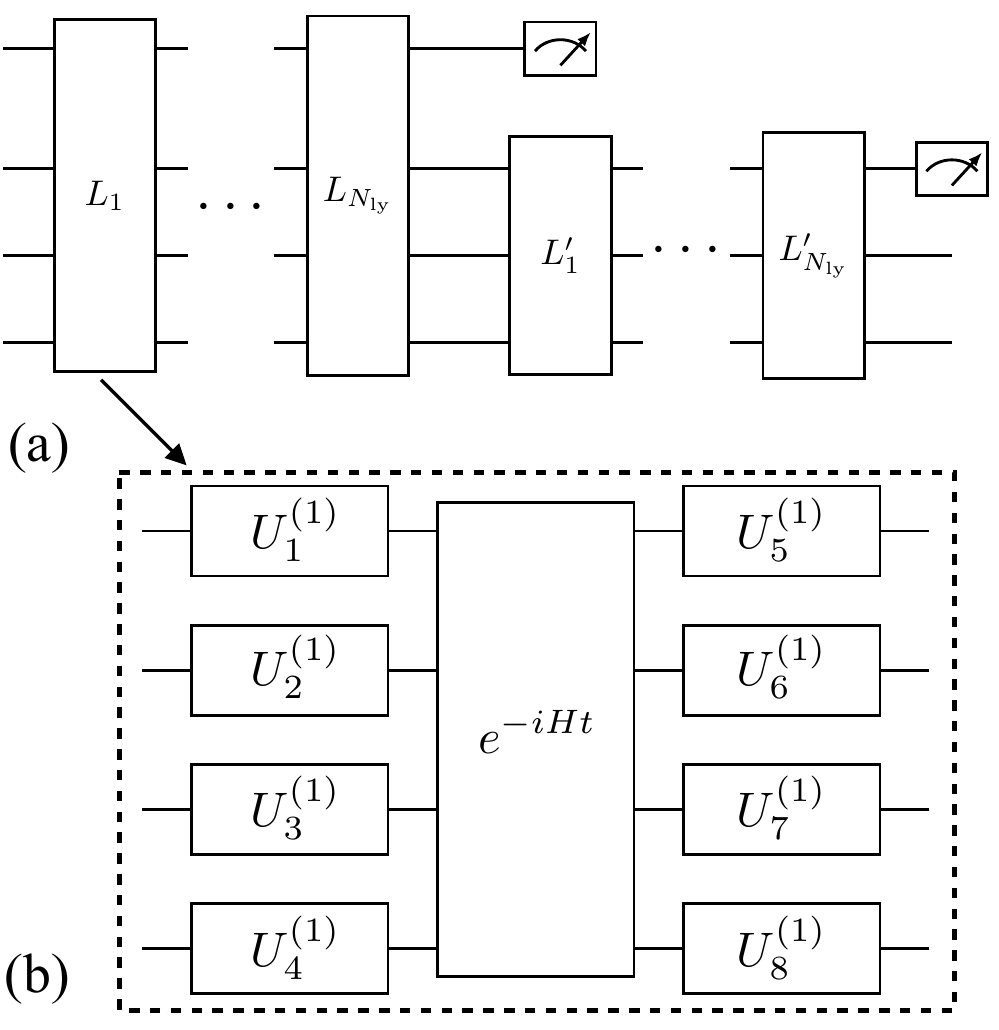}
\caption{(a) An example of programmable quantum autoencoders with two groups of $N_{\text{ly}}$-layer networks, which compress the $4$-qubit states to $2$-qubit states in two stages. (b) The circuit structure of each layer.}
\label{fig:c}
\end{figure}

\subsection{Optimization} When the form of $\mathcal{S}$ is known, it would be straightforward to determine $U_e$. However, $\mathcal{S}$ is unknown in most practical situations, and the quantum autoencoder should be obtained by optimization. Namely, one trains a programmable unitary such that the projection of data (but not noises) into $\mathcal{J}$ is minimized. Specifically, we define measurements $\{M_{L},M_{J} \}$ corresponding to the projection into the latent subspace or junk subspace respectively:
$M_L=\sum_{i=1}^{\dim \mathcal{L}}|L_i\rangle\langle L_i|,\; M_J=\sum_{i=1}^{\dim \mathcal{J}}|J_i\rangle\langle J_i|$. A programmable unitary $U(\bm{\theta})$ with a set of tunable parameters $\bm{\theta}$ is constructed, and applied to the input data $\rho_{\text{in}}$, giving $\sigma(\bm{\theta})\equiv U(\bm{\theta})\rho_{\text{in}} U(\bm{\theta})^{\dag}$. $\sigma(\bm{\theta})$ is expected to be fully in the latent subspace, so measurements of it in the junk subspace should give zero. We then define the cost function as
\begin{equation}
 C(\bm{\theta})=\tr \left[M_J\sigma({\bm{\theta}})\right].\label{eq:cost}
\end{equation} 
Minimizing $C(\bm{\theta})$ with respect to parameters $\bm{\theta}$ gives an approximated quantum autoencoder, $\widetilde{U}_{\text{e}}(\bm{\theta})$. In this work, we use the gradient descent method. In each iteration, the parameters are updated according to $\bm{\theta}\leftarrow\bm{\theta}-\gamma\nabla C(\bm{\theta})$, where $\gamma$ is the step size, and $\nabla C(\bm{\theta})$ the gradient of the cost function.
We note that the gradient descent could also follow cost functions other than Eq.~\eqref{eq:cost}. In Supplemental Material~\cite{sm}, we discuss the purity cost function as an alternative.

\subsection{Error mitigation} For noisy quantum data, the compressed state may leak out of the latent subspace, which actually provides a way for error detection. In the following, we use $\rho$ and $\sigma$ to represent the uncompressed and compressed quantum states, and a tilde indicates a noisy quantum state while a prime indicates that the error has been mitigated.

An ideal quantum state $\rho\in R$ deteriorated by noises can be generally expressed as
\begin{equation}
 \widetilde{\rho}=\mathcal{E}(\rho)=(1-\varepsilon)\rho+\varepsilon\rho^{\text{err}}. \label{eq:nc}
\end{equation}
Here the error term $\rho^{\text{err}}={\mathcal{E}}^{\text{err}}(\rho)=\sum_{i}A_i\rho A^\dag_i$, with $\sum_iA^\dag_iA_i=\mathbb{I}_{\dim\mathcal{H}}$ ($\mathbb{I}_N$ represents the $N$-dimensional identity). In other words, there is probability $\varepsilon\in[0,1]$ for the quantum data to be corrupted, which is determined by particular systems. We may define the infidelity between matrices $\rho_1,\rho_2$ as $\Delta(\rho_1,\rho_2)=1-\left(\text{Tr}\sqrt{\sqrt{\rho_1}\rho_2\sqrt{\rho_1}}\right)^2$. If $\rho$ is pure, we have $\Delta(\rho,\widetilde{\rho})=\varepsilon\Delta(\rho,\rho^{\text{err}})$.

Now we show that our detection-based quantum autoencoder can detect and dramatically reduce error (see Fig.~\ref{fig:enc}).
We begin with applying the encoding unitary $U_e$ to the noisy quantum state. The compressed state $\widetilde{\sigma}\equiv U_e\widetilde{\rho}\,U_e^\dag$ becomes $\widetilde{\sigma}=(1-\varepsilon)\sigma +\varepsilon\, U_e\rho^{\text{err}}U_e^\dag$. The key step is to perform the measurement $\{M_L,M_J\}$ for error detection. 
Since $\sigma$ is in the latent subspace, we have
\begin{align}\label{eq:m}
M_L\widetilde{\sigma}M_L^{\dag}&=(1-\varepsilon)\sigma+ \varepsilon M_LU_e\rho^{\text{err}}U_e^{\dag}M_L^{\dag}\notag\\
      &=(1-\varepsilon)\sigma+\varepsilon U_e \Lambda^{\text{err}}_s U_e^{\dag},
\end{align}
where $\Lambda_s^{\text{err}}\equiv M_S\rho^{\text{err}}M_S^{\dag}$ and $M_S\equiv U_e^{\dag}M_LU_e$. Assuming $\dim \mathcal{S}=\dim \mathcal{L}$, $M_S$ can always be rewritten as $M_{S}=\sum_{i=1}^{\dim\mathcal{L}}|S_i\rangle\langle S_i|$ for certain $|S_i\rangle\in \mathcal{S}$. Therefore, $\Lambda_s^{\text{err}}$ is the projection of $\rho^{\text{err}}$ to the support subspace $\mathcal{S}$. When $\varepsilon$ is small, according to Eq.~\eqref{eq:m}, with a high probability $p_s=1-\varepsilon\tr\left(\Lambda_s^{\text{err}}\right)$, the states are projected to the latent subspace $\mathcal{L}$, and become 
\begin{align}
\sigma'&=\frac{(1-\varepsilon)\sigma+\varepsilon U_e \Lambda^{\text{err}}_s U_e^{\dag}}{1-\varepsilon+\varepsilon\tr\left(\Lambda_s^{\text{err}}\right)  }\notag\\
&=[p_s+O(\varepsilon^2)]\sigma +[\varepsilon+O(\varepsilon^2)]\, U_e\Lambda_s^{\text{err}}\,U_e^\dag,
\end{align}
where higher order terms are determined by $\varepsilon$ and $\tr\left(\Lambda_s^{\text{err}}\right)$.
On the other hand, with a small probability $1-p_s$, the states are projected into $\mathcal{J}$ instead. In this case, the errors are detected, and the corresponding quantum data are discarded. 
Finally, the decoding unitary $U_e^\dag$ is applied to $\sigma'$ to obtain the error-mitigated state, $\rho'\equiv U_e^\dag\sigma'U_e$, which can be written as 
\begin{align}
\rho'=\left[1-\varepsilon\tr\left(\Lambda_s^{\text{err}}\right)+O(\varepsilon^2)\right]\rho +\left[\varepsilon+O(\varepsilon^2)\right]\Lambda_s^{\text{err}}.\label{eq:r_enc}
\end{align}
When $\rho$ is pure, the infidelity between $\rho'$ and $\rho$ becomes 
\begin{equation}
\Delta\left(\rho,\rho'\right)=\varepsilon\text{Tr}(\Lambda^{\text{err}}_s)\Delta(\rho,\Lambda^{\text{err}}_s/\text{Tr}(\Lambda^{\text{err}}_s)) +O(\varepsilon^2).\label{eq:I}
\end{equation}
We note that the detection-based autoencoder can effectively denoise when the noise $\widetilde{\mathcal{E}}(\cdot)$ mainly drives the quantum states out of the support subspace $\mathcal{S}$, i.e. $\text{Tr}(\Lambda^{\text{err}}_s)$ is small.

In~\cite{sm}, we study, as an example, the global depolarization noise: $\rho^{\text{err}}=\widetilde{\mathcal{E}}_\text{gl}(\rho)=\frac{1}{\dim\mathcal{H}}\mathbb{I}_{\dim\mathcal{H}}$, which is commonly adopted in large-scale superconducting circuits \cite{Boixo.18,Arute.19}. We have found that $\Delta(\rho,\rho')$, the infidelity for the corrected data, reduces exponentially with the number of qubits being measured during the compression, provided $\rho$ is pure. 

\begin{figure} [tbp]
\includegraphics[width=\columnwidth]{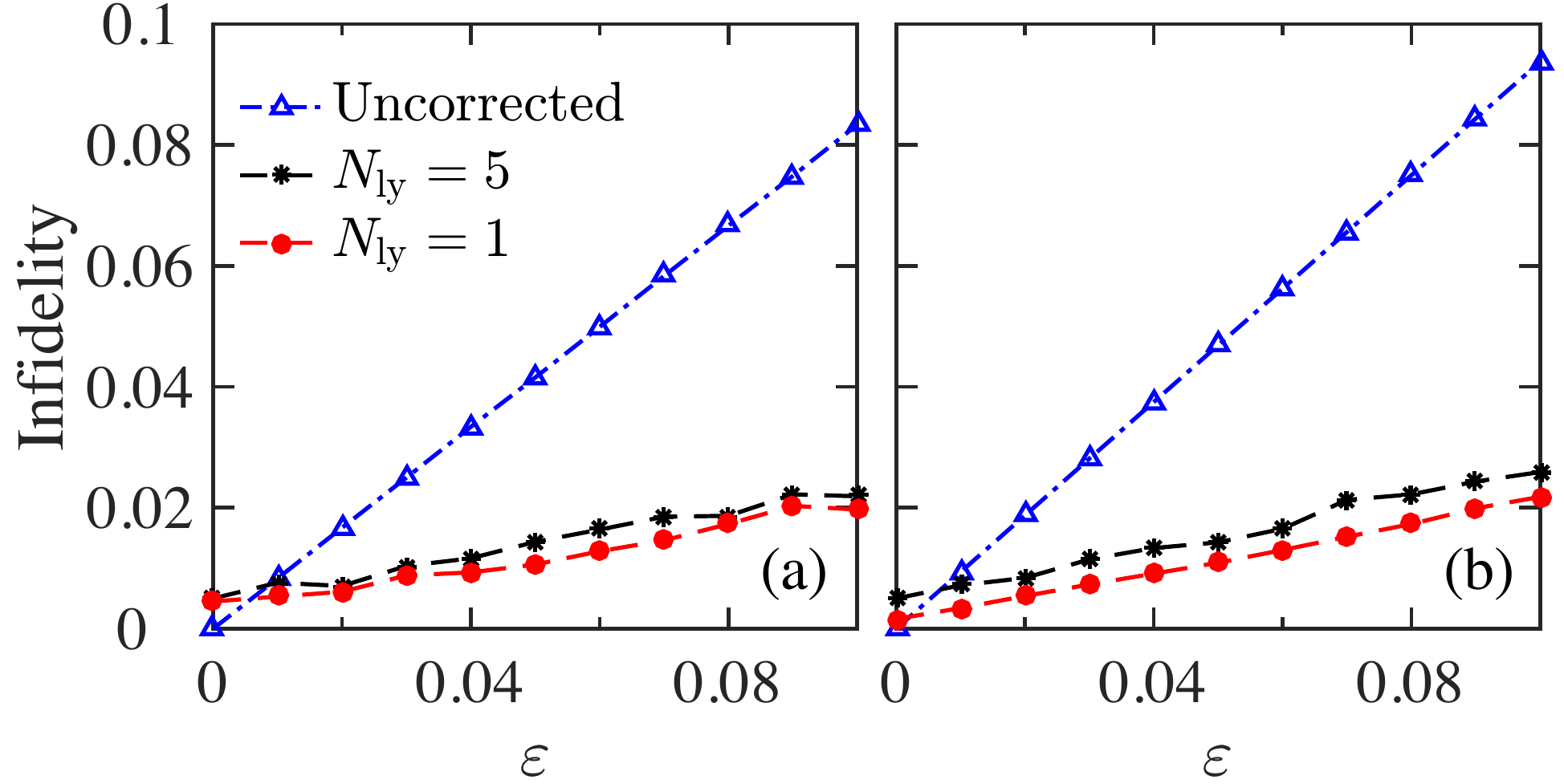}
\caption{ Performance of a detection-based quantum autoencoder for W class states with forms of input states unknown, under (a) local depolarization noise $\widetilde{\mathcal{E}}_{\text{lc}}(\cdot)$ and (b) global depolarization noise $\widetilde{\mathcal{E}}_{\text{gl}}(\cdot)$. We set $n=4$.}
\label{fig:train}
\end{figure}

\subsection{Applications} One may find applications of the detection-based quantum autoencoders whenever $\dim \mathcal{S}<\dim \mathcal{H}$. Here, we consider an example of $n$-qubit W class states which are frequently used to study entanglement \cite{Parashar.09} or spin preserving systems \cite{Christandl.04}:
\begin{equation}
|\psi_{\text{W}}\rangle=\alpha_1|10\cdots0\rangle+\alpha_2|01\cdots0\rangle+\cdots+\alpha_n|00\cdots1\rangle,\label{eq:w}
\end{equation}
for which $\dim \mathcal{H}=2^n$ and $\dim \mathcal{S}=n$. The compressed state is still an $n$-qubit state. However, the number of qubits required for encoding can be reduced if the latent subspace is chosen appropriately~\cite{Romero.17}.
We can then construct a latent subspace $\mathcal{L}$ spanned by the bases $\left\{|0\cdots0\, a_1 a_2\cdots a_m\rangle\right\}$, where $a_i=0,1$. Here, $m=\lceil\log_2n\rceil$ is the minimal number of qubits required to represent  $\mathcal{L}$. The corresponding measurements are $M_L=|0\rangle\langle0|^{\otimes n-m}\otimes \mathbb{I}_{2^{m}}$, and $M_J=\mathbb{I}_{2^n}-M_L$. In fact, $M_L$ corresponds to the projection of the first $(n-m)$ qubits to $|0\rangle$. These $(n-m)$ qubits are not used for encoding $\sigma$, so they can be considered as ``ancilla qubits" during the compression. However, they are still within the original data $\rho$, and no extra qubits are required.  The quantum circuit for the entire error-mitigation process is shown in Fig.~\ref{fig:enc}(c). 

We consider two noise models: the global depolarization noise $\widetilde{\mathcal{E}}_{\text{gl}}(\rho)$ mentioned above, and the local depolarization noise $\rho^{\text{err}}=\widetilde{\mathcal{E}}_{\text{lc}}(\rho)=\sum_{i=1}^{n}\sum_{\mu=x,y,z}\sigma_{i}^{\mu}\rho\,\sigma_{i}^{\mu}$. Here, $\sigma_{i}^{\mu}=\mathbb{I}_{2^{i-1}}\otimes\sigma^\mu\otimes \mathbb{I}_{2^{n-i}}$ are Pauli $X,Y$ or $Z$ operators acting on the $i$th qubit.

As a first verification of our detection-based autoencoder, we assume that the input states are known to be in the form of Eq.~\eqref{eq:w}. It is straightforward to construct the encoding unitary $U_\text{e}$ (see~\cite{sm}). Each data point represents an average over $1000$ runs with different input states, whose parameters $\alpha_i$ are randomly drawn from a normal distribution $\mathcal{N}(0,1)$ before the state is normalized.
 For different values of $n$ and noise models, the infidelities of $\rho'$ are much lower than the uncorrected data $\widetilde{\rho}$. In particular, for global depolarization noise [Fig.~\ref{fig:w}(b)], the infidelity for $\widetilde{\rho}$ increases with $n$ and converges to $0.1$.  For corrected states, however, the infidelity decreases exponentially with $n$, and finally converges to $0$. This trend is consistent with our analysis on the global depolarization noise in~\cite{sm}.

We then move on to a more complicated scenario in which the form of input states is unknown. We design a programmable circuit, and we minimize the corresponding cost function Eq.~\eqref{eq:cost}. As shown in Fig.~\ref{fig:c}, the programmable circuit has a layered structure, which is conceptually similar to those presented in \cite{Romero.17,Mitarai.18}, but the key difference is that we have two groups of $N_{\rm ly}$ layers that perform compression in different stages. Each layer contains a set of arbitrary single-qubit rotations and a global entangling unitary $e^{-iH\tau}$ (we set $\tau=1$), where 
\begin{align}
H=\sum_{i=1}^n\sum_{\mu=x,z}h_{i}^\mu\sigma_{i}^\mu+\sum_{i=1}^n\sum_{j>i}^n\sum_{\mu=x,y,z}J_{i,j}^\mu\sigma_{i}^\mu\sigma_{j}^\mu.\label{eq:Ham}
\end{align}
Here, $h_{i}^\mu$, $J_{i,j}^\mu$ are adjustable parameters. Unlike the previous example, the dimensionality of the latent space $\dim \mathcal{L}$ can only be found by trial and error. There are a total of $O(n^2N_{\text{ly}})$ parameters to be trained, and each trial takes time $O(N_{\text{ly}})$. So each interaction takes  time $O(n^2N_{\text{ly}}^2)$. Once an appropriate $\dim \mathcal{L}$ is found, we perform the compression in two stages, which turns out to be better than the single-stage method \cite{sm}. Taking $n=4$ as an example, we train the first group of $N_{\text{ly}}$ layers to compress the input states to the subspace spanned by $\{|0\,b_1b_2b_3\rangle\}$ with $b_i=0,1$ (driving the first qubit to $|0\rangle$) in the first stage. In the second stage, the remaining group of $N_{\text{ly}}$ layers further compress the states to the subspace spanned by $\{|0\,0\,a_1a_2\rangle\}$ with $a_i=0,1$ (driving the second qubit to $|0\rangle$). After the two stages, the quantum states have been successfully compressed to a latent subspace with $\dim \mathcal{L}=4$.

In our simulation, we take four $n=4$ states corresponding to $\{\alpha_1,\alpha_2,\alpha_3,\alpha_4\}=\{1,0,0,0\}$ , $\{0,1,0,0\}$ , $\{0,0,1,0\}$, and $\{0,0,0,1\}$ of Eq.~\eqref{eq:w} as training data, and the cost function Eq.~\eqref{eq:cost} is evaluated using the averaged results over these four states. The input of the quantum autoencoder is uncorrected states $\widetilde{\rho}$ described in Eq.~\eqref{eq:nc} under either the local or global depolarization noise.  The parameters of the circuit are trained with gradient descent until convergence.  Then the performance is tested with $1000$ quantum states of the form Eq.~\eqref{eq:w} with randomly generated $\alpha_i$. 

The comparison of infidelities for uncorrected states $\widetilde{\rho}$ and corrected states $\rho'$ is shown in Fig.~\ref{fig:train} (note that it is $C(\bm{\theta})$, rather than the infidelity, that is used as the cost function for training). When $\varepsilon=0$, $\rho'$ still has non-zero infidelities (0.0015 for $N_{\text{ly}}=1$ and $0.005$ for $N_{\text{ly}}=5$), which is an artifact of the training procedure. When $\varepsilon>0.01$, the infidelities for corrected states are much lower. In particular, for $\varepsilon=0.1$ and $N_{\text{ly}}=1$, the infidelities are reduced by $76\%$ and $77\%$  for local and global depolarization noises respectively. It is remarkable to note that a shallow circuit with $N_{\text{ly}}=1$ is already good enough for error-mitigation, and increasing $N_{\text{ly}}$ actually does not offer improvements (see \cite{sm} for more details). Our results indicate that the detection-based quantum autoencoder can learn the optimal compression methods even with noisy input data. 

Our protocol with detection-based quantum autoencoders is quite general and can be applied to a variety of problems. 
 In Supplemental Materials \cite{sm}, we also provide results under different practical situations including large noises, mixed states, as well as noisy circuits \cite{Johnson.17,Strikis.20}.

\section{Discussions on error-mitigation power}\label{sec:disconerror}

\subsection{Requirement on error-free data}\label{subsec:req}

In this section we show the generality of our method using detection-based quantum autoencoders, and we compare the error-mitigation power to neural-network-based ones. To facilitate the discussion and comparison, we impose the following requirement for ideal states unless otherwise specified: 
$\\$

\emph{The corrected state for an error-free state should also be error-free.}

$\\$
This requirement is well-satisfied by most standard error-mitigation techniques, including quantum error correction, dynamical decoupling and decoherence-free subspace.
\subsection{Generality}
Most existing error-mitigation methods for near-term quantum devices focus on specific types of errors. For example, constraining works well for errors breaking certain symmetries \cite{McClean.16,Ryabinkin.18}, while stabilizer-like methods \cite{SMcArdle.19} are suitable for depolarization error. Our method is general: it can remove all errors detected outside of the support subspace, which include but are not limited to the types of errors discussed above. More rigorously [see proof in~\cite{sm}], we have the following:

 \begin{theorem}\label{th:opt}
 If a quantum operation $\widehat{\mathcal{E}}(\cdot)$ satisfies the requirement stated in Sec.~\ref{subsec:req}, then $\widehat{\mathcal{E}}(\Lambda_s^{\text{err}})=\Lambda_s^{\text{err}}$.
 \end{theorem}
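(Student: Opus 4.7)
The plan is to show that $\Lambda_s^{\text{err}}$ lies in the operator subspace on which $\widehat{\mathcal{E}}$ must act trivially, and then invoke linearity of $\widehat{\mathcal{E}}$. First I would observe that, by construction, $\Lambda_s^{\text{err}} = M_S \rho^{\text{err}} M_S^\dag$ with $M_S = \sum_{i=1}^{\dim\mathcal{L}} |S_i\rangle\langle S_i|$ and $|S_i\rangle \in \mathcal{S}$. Hence $\Lambda_s^{\text{err}}$ is a Hermitian positive semidefinite operator whose support lies entirely within the support subspace $\mathcal{S}$.

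Next I would sharpen the informal requirement of Sec.~\ref{subsec:req}. I would argue that the correct mathematical content of ``the corrected state for an error-free state should also be error-free'' is that $\widehat{\mathcal{E}}(\rho)=\rho$ for every density matrix $\rho$ whose support is contained in $\mathcal{S}$: an error-free input carries no noise to remove, and any nontrivial relocation inside the error-free manifold would, by composition with a noiseless preparation of a different ideal state, break the requirement. This is precisely the behaviour exhibited by every standard technique cited in the text (quantum error correction, dynamical decoupling, decoherence-free subspaces) on its own code or protected subspace.

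The argument then closes by linearity. Since $\widehat{\mathcal{E}}$ is a CPTP map it is $\mathbb{R}$-linear on Hermitian operators. Every rank-one projector $|\phi\rangle\langle\phi|$ with $|\phi\rangle\in\mathcal{S}$ is a density matrix supported on $\mathcal{S}$, and such projectors (together with combinations of the form $(|\phi\rangle+|\chi\rangle)(\langle\phi|+\langle\chi|)/2$ and their imaginary counterparts) real-span the full space of Hermitian operators with support in $\mathcal{S}$. By the previous paragraph $\widehat{\mathcal{E}}$ fixes each such projector, so by linearity it acts as the identity on all Hermitian operators supported on $\mathcal{S}$. Since $\Lambda_s^{\text{err}}$ is exactly such an operator, we conclude $\widehat{\mathcal{E}}(\Lambda_s^{\text{err}})=\Lambda_s^{\text{err}}$.

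The main obstacle I expect is the interpretive step, i.e.\ promoting the prose requirement to the precise claim that $\widehat{\mathcal{E}}$ restricts to the identity on every $\mathcal{S}$-supported operator. Under a strict reading in which only the nominal training set $R$ itself is deemed ``error-free'', one would additionally have to verify that the real span of $R$ exhausts the space of Hermitian operators supported on $\mathcal{S}$; this should follow from the very definition of $\mathcal{S}$ as the span of eigenvectors of states in $R$, but promoting that spanning property from vectors to rank-one (and hence all) Hermitian operators on $\mathcal{S}$ is where the only real bookkeeping lies. Once this interpretive step is settled, the theorem reduces to the one-line linearity argument above.
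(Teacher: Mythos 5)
Your reduction has the right target --- show that $\widehat{\mathcal{E}}$ acts as the identity on all Hermitian operators supported on $\mathcal{S}$, then note $\Lambda_s^{\text{err}}$ is such an operator --- but the step you dismiss as ``the only real bookkeeping'' is in fact the entire content of the theorem, and linearity alone cannot supply it. The paper formalizes the requirement of Sec.~\ref{subsec:req} as $\widehat{\mathcal{E}}(|\psi_a\rangle\langle\psi_a|)=|\psi_a\rangle\langle\psi_a|$ only for $|\psi_a\rangle\in R$, and $R$ may contain as few as $M=\dim\mathcal{S}$ linearly independent states. The real span of the corresponding projectors $\{|\psi_a\rangle\langle\psi_a|\}$ is then a proper subspace of the $M^2$-dimensional space of Hermitian operators on $\mathcal{S}$ --- it misses the coherence directions $|S_i\rangle\langle S_j|+|S_j\rangle\langle S_i|$ and their imaginary counterparts. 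A generic linear map fixing those few projectors need not fix $\Lambda_s^{\text{err}}$ (a dephasing map in a basis containing the $|\psi_a\rangle$ is the obvious worry), so your one-line linearity argument does not close from the paper's actual hypothesis. Your alternative move --- reinterpreting the prose requirement as ``$\widehat{\mathcal{E}}$ fixes \emph{every} density matrix supported on $\mathcal{S}$'' --- makes the theorem trivially true but assumes precisely what needs to be proved; the informal justification you give (composition with a noiseless preparation) never addresses preservation of coherences between distinct ideal states.

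What actually closes the gap is the quantum structure of $\widehat{\mathcal{E}}$, which your argument never uses beyond linearity. The paper writes $\widehat{\mathcal{E}}(\rho)=\mathrm{Tr}_E[PU(\rho\otimes|e_0\rangle\langle e_0|)U^{\dag}P]$ via a dilation, deduces from the fixed pure states that $U(|\psi_a\rangle\otimes|e_0\rangle)=|\psi_a\rangle\otimes|\psi_{\text{env},a}\rangle$, and then uses the \emph{vector-level} linearity of $U$ together with the nonzero Gram--Schmidt overlaps among the $|\psi_a\rangle$ (e.g.\ $|\psi_2\rangle=x|S_1\rangle+y|S_2\rangle$ with $x,y\neq0$) to force all the environment states $|\psi_{\text{env},a}\rangle$ to coincide. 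That common environment state is exactly what guarantees the off-diagonal blocks $|S_i\rangle\langle S_j|$ are preserved, after which $U(\Lambda_s^{\text{err}}\otimes|e_0\rangle\langle e_0|)U^{\dag}=\Lambda_s^{\text{err}}\otimes|\psi_{\text{env}}\rangle\langle\psi_{\text{env}}|$ and the conclusion follows. To repair your proof you would need either to adopt the strictly stronger hypothesis explicitly (and acknowledge it is stronger than the paper's), or to reproduce this dilation argument; as written, the proposal has a genuine gap at its central step.
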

 Because the ideal detection-base quantum autoencoder can remove all errors outside $\mathcal{S}$, i.e.~$\Lambda_t^{\text{err}}$, Theorem~\ref{th:opt} implies that it is near optimal.  

\subsection{Comparision to a neural-network-based quantum autoencoder}
We now compare the detection-based and the neural-network-based autoencoders~\cite{Bondarenko.19}. The post-selection discarding states in the junk subspace enable us to mitigate a wide variety of error. For pure states, even in the worst-case scenario, the infidelity from our method is merely $O(\varepsilon^2)$ worse than \cite{Bondarenko.19}. More rigorously, if we define $U_{\text{nn}}$ as the neural-network-based quantum autoencoder, and $\rho'_{\text{nn}}$ as the output of it, we have the following theorem:
\begin{theorem}\label{th:com}
Given $U_{\text{nn}}$ satisfying the requirement stated in Sec.~\ref{subsec:req}, for an arbitrary $|\psi_a\rangle\in R$, we have $\Delta(|\psi_a\rangle\langle\psi_a|,\rho')\leqslant \Delta(|\psi_a\rangle\langle\psi_a|,\rho'_{\text{nn}})+O(\varepsilon^2).$
\end{theorem}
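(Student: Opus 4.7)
The plan is to expand both infidelities to leading order in $\varepsilon$ and reduce Theorem~\ref{th:com} to a single operator inequality on $\langle\psi_a|U_{\text{nn}}(\rho^{\text{err}})|\psi_a\rangle$. For pure $\rho=|\psi_a\rangle\langle\psi_a|$ with $|\psi_a\rangle\in\mathcal{S}$, Eq.~\eqref{eq:I} already yields $\Delta(\rho,\rho') = \varepsilon[\text{Tr}(\Lambda_s^{\text{err}})-\langle\psi_a|\Lambda_s^{\text{err}}|\psi_a\rangle]+O(\varepsilon^2)$. On the neural-network side, the requirement of Sec.~\ref{subsec:req} combined with the linearity of $U_{\text{nn}}$ gives $\rho'_{\text{nn}}=(1-\varepsilon)\rho+\varepsilon U_{\text{nn}}(\rho^{\text{err}})$, so $\Delta(\rho,\rho'_{\text{nn}}) = \varepsilon[1-\langle\psi_a|U_{\text{nn}}(\rho^{\text{err}})|\psi_a\rangle]$. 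Letting $\Pi_S,\Pi_T$ denote the projectors onto $\mathcal{S}$ and $\mathcal{S}^\perp$ and $\Lambda_t^{\text{err}}=\Pi_T\rho^{\text{err}}\Pi_T$, and noting $1-\text{Tr}(\Lambda_s^{\text{err}})=\text{Tr}(\Lambda_t^{\text{err}})$, the theorem boils down to the sharp bound
\begin{equation*}
\langle\psi_a|U_{\text{nn}}(\rho^{\text{err}})|\psi_a\rangle \,\le\, \langle\psi_a|\Lambda_s^{\text{err}}|\psi_a\rangle + \text{Tr}(\Lambda_t^{\text{err}}).
\end{equation*}

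The structural input is Theorem~\ref{th:opt} read in its natural operator form: for every $A\in B(\mathcal{S})$ one has $U_{\text{nn}}(A)=A$, which follows from linearity once Theorem~\ref{th:opt} is applied to the family of error models whose $\Lambda_s^{\text{err}}$ exhausts $B(\mathcal{S})$. Picking any Kraus decomposition $U_{\text{nn}}(\cdot)=\sum_j K_j(\cdot)K_j^\dagger$ with $\sum_j K_j^\dagger K_j=\mathbb{I}$, the identity $U_{\text{nn}}(|\phi\rangle\langle\phi|)=|\phi\rangle\langle\phi|$ on every $|\phi\rangle\in\mathcal{S}$ forces each $K_j|\phi\rangle$ to be parallel to $|\phi\rangle$; linearity across $\mathcal{S}$ pins this to a single scalar, so $K_j\Pi_S=c_j\Pi_S$ with $\sum_j|c_j|^2=1$. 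Defining $|w_j\rangle:=\Pi_T K_j^\dagger|\psi_a\rangle\in\mathcal{S}^\perp$, one obtains the convenient decomposition $K_j^\dagger|\psi_a\rangle=c_j^*|\psi_a\rangle+|w_j\rangle$.

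Substituting into $\langle\psi_a|U_{\text{nn}}(\rho^{\text{err}})|\psi_a\rangle=\sum_j\langle K_j^\dagger\psi_a|\rho^{\text{err}}|K_j^\dagger\psi_a\rangle$ splits the result into three pieces: a diagonal $\sum_j|c_j|^2\langle\psi_a|\rho^{\text{err}}|\psi_a\rangle$, which equals $\langle\psi_a|\Lambda_s^{\text{err}}|\psi_a\rangle$ since $|\psi_a\rangle\in\mathcal{S}$; a cross term $2\,\text{Re}\sum_j c_j\langle\psi_a|\rho^{\text{err}}|w_j\rangle$, which vanishes because the $(\mathcal{S},\mathcal{S}^\perp)$ block of $\sum_j K_j^\dagger K_j=\mathbb{I}$ reads $\sum_j c_j^*\Pi_S K_j\Pi_T=0$, whose adjoint at $|\psi_a\rangle$ is precisely $\sum_j c_j|w_j\rangle=0$; and a tail $\sum_j\langle w_j|\Lambda_t^{\text{err}}|w_j\rangle$, which is bounded by noting $\sum_j|w_j\rangle\langle w_j|=\Pi_T\,U_{\text{nn}}^\dagger(|\psi_a\rangle\langle\psi_a|)\,\Pi_T\le\Pi_T$, using that the Heisenberg adjoint $U_{\text{nn}}^\dagger$ is unital CP and $|\psi_a\rangle\langle\psi_a|\le\mathbb{I}$; this gives $\sum_j\langle w_j|\Lambda_t^{\text{err}}|w_j\rangle\le\text{Tr}(\Lambda_t^{\text{err}})$. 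Assembling the three pieces closes the desired inequality. The main obstacle will be the cross-term cancellation: one has to extract the correct off-diagonal block of trace preservation and realise that it yields $\sum_j c_j|w_j\rangle=0$; once this is in hand, the tail estimate reduces to a routine positivity manipulation for unital completely positive maps.
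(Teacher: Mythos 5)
Your proof is correct and reaches the same leading-order inequality as the paper, but by a genuinely different route. The paper never works at the level of an abstract channel: it keeps the explicit dilation $\rho'_{\text{nn}}=\mathrm{Tr}_{\text{in,hid}}[U_{\text{nn}}(\widetilde{\rho}\otimes|0\rangle\langle0|)U_{\text{nn}}^{\dag}]$, shows via Gram--Schmidt and linearity of the dilating unitary that $U_{\text{nn}}|S_i\rangle\otimes|0\rangle=|\phi\rangle\otimes|S_i\rangle$ with a common $|\phi\rangle$, splits $\rho^{\text{err}}$ into the three blocks $\Lambda_s^{\text{err}},\Lambda_{st}^{\text{err}},\Lambda_t^{\text{err}}$, kills the cross block because its partial trace has the form $\sum_{i,j}\beta'_{i,j}|S_i\rangle\langle T_j|+\mathrm{H.c.}$ with $\langle T_j|\psi_a\rangle=0$, and bounds the tail by the trivial fact that the normalized $\mathrm{Tr}_{\text{in,hid}}(\widehat{\Gamma}_t^{\text{err}})$ is a density matrix whose overlap with $|\psi_a\rangle$ is at most $1$. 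You instead work with a Kraus decomposition of the induced CPTP map, pin down $K_j\Pi_S=c_j\Pi_S$ from the rank-one fixed-point argument, cancel the cross term using the $(\mathcal{S},\mathcal{S}^{\perp})$ block of $\sum_jK_j^{\dag}K_j=\mathbb{I}$, and bound the tail via unitality of the Heisenberg adjoint; all of these steps check out. Your version is representation-independent (it applies to any CPTP map satisfying the requirement, not just the specific QNN architecture) and makes the mechanism of near-optimality more transparent, at the cost of leaning on Theorem~\ref{th:opt} to upgrade the requirement from the states in $R$ to identity action on all of $B(\mathcal{S})$ --- which imports exactly the same implicit genericity assumption on $R$ (nonvanishing consecutive overlaps in the Gram--Schmidt chain) that the paper's own Eq.~\eqref{eq:s_1_s_2} argument relies on, so this is not an additional gap. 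The paper's dilation-level bookkeeping is clunkier but produces the intermediate objects $\Gamma_s^{\text{err}},\Gamma_{st}^{\text{err}},\Gamma_t^{\text{err}}$ that it reuses in the proof of Theorem~\ref{th:opt} and in the leakage example.
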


One of the major differences between neural-network-based autoencoders~\cite{Bondarenko.19} and our method is that for the former \cite{Bondarenko.19}, the error term is expected to be recovered to the correct state, while for our method, the error term is detected and removed. This is the reason why in extreme cases, the output infidelity of neural-network-based autoencoders is lower by $O(\varepsilon^2)$, especially when certain parts of the errors are neither recoverable nor detectable, and other parts of the errors are recoverable. This difference is, however, negligible when $\varepsilon\ll1$.

Moreover, there are errors such as decay-type errors that may not be treated directly in \cite{Bondarenko.19} but can be accommodated in our method (see~\cite{sm}).
 For data set $R=\{|i\rangle\langle i|\}$ with $i=1,2,\cdots,N$ and error term $\rho^{\text{err}}=|0\rangle\langle0|$ with $\langle i|j\rangle=\delta_{i,j}$, the encoding unitary is simply an identity, and it can be verified that the errors can be fully mitigated. But without a post-selection, no improvement is expected.
 
 The key point is that post-selection is non-trace-preserving, which makes our method general. In~\cite{sm}, we have also shown that if the approach in \cite{Bondarenko.19} is implemented in conjunction with measurement and post-selection, then the two methods would differ by at most $O(\varepsilon^2)$.

On the other hand, detection-based autoencoder has its own limitations. There is a risk that quantum data may be discarded if the states are projected to the junk subspace, and when $\varepsilon$ is large, the probability $\varepsilon\text{Tr}(\rho_s^{\text{err}})+O(\varepsilon^2)$ is not negligible. Moreover, the latent subspace should be carefully chosen to avoid over-compression, under-compression, and mismatch with the ansatz (detailed in~\cite{sm}).

Finally, we note that our detection-based method may be particularly useful when the qubit resources are limited, or when errors cannot be removed by mere trace-preserving mapping. On the other hand, when errors are large, or in situations where measurements could destroy quantum states (e.g.~in certain linear optical systems~\cite{Carolan.15}), the neural-network-based method \cite{Bondarenko.19} would be a better choice, provided sufficient qubit resources. These hurdles may be overcome when the neural-network-based auto-encoder is applied in conjunction with measurement and post-selection, which warrants further investigation.

\section{Conclusions and Outlook} 

Further improvement can be made to the detection-based autoencoder. Firstly, the junk subspace does not have to be predetermined. The form of $\mathcal{L}$ can alter during training provided that the dimension is fixed, which may potentially improve the performance. Secondly, we are using the simplest gradient-based optimization in finding the parameters $\bm{\theta}$, and more sophisticated algorithms such as gradient desent with momentum \cite{Rumelhart.85} and the Adam method \cite{Kingma.14} could also accelerate convergence. Other global optimization methods, such as simulated annealing \cite{Kirkpatrick.83} and reinforcement learning \cite{Niu.19,Xu.19,Zhang.19}, can help in preventing confinement of the optimization to a poor local minimum. Finally, optimization can also be made on the ansatz. In Eq.~\eqref{eq:Ham}, we have assumed a flexible Hamiltonian. When the Hamiltonian of a practical system is restricted, the unitary $e^{-iH\tau}$ should be decomposed into available gates with techniques such as Trotterization~\cite{Lloyd.96,Berry.07}. A simpler ansatz may reduce the number of gates required for decomposition, but is may also increase the number of layers in the neural network, while only a single layer is needed for Eq.~\eqref{eq:w} in our task. So there is a trade-off between the number of layers and the flexibility in each layer.

In summary, we have proposed a generic detection-based quantum autoencoder that can mitigate error without requiring additional qubits. By compressing the quantum data to a latent subspace, the error can be detected by a projection measurement. We believe that our protocol is particularly suitable for near-term NISQ devices, when the number of controllable qubits is not large while errors remain significant.

\section{Acknowledgements}  This work is supported by the Key-Area Research and Development Program of GuangDong Province  (Grant No. 2018B030326001), the National Natural Science Foundation of China (Grant Nos.~11874312, 11625419, 11875160, U1801661), the Research Grants Council of Hong Kong (Grant Nos.~CityU 11303617, CityU 11304018, CityU 11304920),
National Key Research and Development Program of China (Grant No. 2016YFA0301700), the Guangdong Innovative and Entrepreneurial Research Team Program (Grant No. 2016ZT06D348), Natural Science Foundation of Guangdong Province (Grant No. 2017B030308003), the Science,Technology and Innovation Commission of Shenzhen Municipality (Grant Nos.~JCYJ20170412152620376, JCYJ20170817105046702, KYTDPT20181011104202253), the Economy, Trade and Information Commission of Shenzhen Municipality (Grant No. 201901161512), and Guangdong Provincial Key Laboratory (Grant No. 2019B121203002).

%



\onecolumngrid
\vspace{1cm}

\begin{center}
{\bf\large Supplementary material}
\end{center}
\vspace{0.5cm}

\setcounter{secnumdepth}{3}  
\setcounter{equation}{0}
\setcounter{figure}{0}
\setcounter{table}{0}
\setcounter{section}{0}

\renewcommand{\theequation}{S-\arabic{equation}}
\renewcommand{\thefigure}{S\arabic{figure}}
\renewcommand{\thetable}{S-\Roman{table}}
\renewcommand\figurename{Supplementary Figure}
\renewcommand\tablename{Supplementary Table}
\newcommand\citetwo[2]{[S\citealp{#1}, S\citealp{#2}]}
\newcommand\citecite[2]{[\citealp{#1}, S\citealp{#2}]}

\newcolumntype{M}[1]{>{\centering\arraybackslash}m{#1}}
\newcolumntype{N}{@{}m{0pt}@{}}

\makeatletter \renewcommand\@biblabel[1]{[S#1]} \makeatother

\makeatletter \renewcommand\@biblabel[1]{[S#1]} \makeatother


\onecolumngrid
\section{Construction of the quantum autoencoder}\label{app:auto}
We consider a set of quantum states $R$ with support subspace $\mathcal{S}$. For simplicity, we define $N\equiv\dim\mathcal{H}$ and $M\equiv\dim\mathcal{S}$. All quantum states $\rho\in R$ can be written as 
  \begin{equation}
 \rho =\sum_{i=1}^{M}p_i|\psi_i\rangle\langle\psi_i|,
 \end{equation} 
where $|\psi_i\rangle\langle\psi_i| \in \mathcal{S}$. Suppose $\mathcal{S}$ is spanned by the orthogonal basis $\{|s_1\rangle,|s_2\rangle,\cdots,|s_{M}\rangle\}$, we define a latent subspace $\mathcal{L}$ with $\dim\mathcal{L}=M$, which is spanned by another set of orthogonal basis $\{|L_1\rangle,|L_2\rangle,\cdots,|L_{M}\rangle\}$. $U_\text{e}$ can be set as an arbitrary unitary satisfying
\begin{equation}
\langle L_i|U_\text{e}|s_i\rangle=1.\label{eq:ue}
\end{equation}
 According to the definition, all $|\psi_i\rangle\in\mathcal{S}$ can be written as $|\psi_i\rangle=\sum_{j=1}^{M} \alpha_{i,j} |s_j\rangle$ for certain values of $\alpha_{i,j}$. We define $|\phi_i\rangle\equiv U_{\text{e}}|\psi_i\rangle$. According to Eq.~\eqref{eq:ue}, $|\phi_i\rangle$ can always be written as the linear combination of $|L_i\rangle$, so $|\phi_i\rangle\in\mathcal{L}$. Therefore, the compressed state $\sigma=U_\text{e}\rho U_{\text{e}}^{\dag}$ can always be written as 
  \begin{equation}
 \sigma=\sum_{i=1}^{M}p_i|\phi_i\rangle\langle\phi_i|,
 \end{equation} 
 for certain $|\phi_i\rangle\in\mathcal{L}$. Therefore $U_\text{e}$ can be used as the encoding unitary for compression. Because the basis of $\mathcal{S}$ is not unique, $U_{\text{e}}$ is not unique neither.

\section{Global depolarization noise}\label{app:dp}
For global depolarization noise, we have $\widetilde{\mathcal{E}}(\rho)=\frac{1}{N}\mathbb{I}_{N}$. For an arbitrary pure state $\rho=|\psi_a\rangle\langle\psi_a|\in R$, the uncorrected quantum states [Eq.~\eqref{eq:nc} in the main text] and the corrected ones [Eq.~\eqref{eq:r_enc} in the main text] can be rewritten as 
\begin{equation}
\widetilde{\rho}=(1-\varepsilon)|\psi_a\rangle\langle\psi_a|+\varepsilon\frac{1}{N}\mathbb{I}_{N},\label{eq:g_0}
\end{equation}
and 
\begin{align}
\rho'=&\left[1-\varepsilon\frac{M}{N}+O(\varepsilon^2)\right]|\psi_a\rangle\langle\psi_a|\\
&+\left[\varepsilon+O(\varepsilon^2)\right]\frac{1}{N}\sum_{i=1}^{M}|S_i\rangle\langle S_i|.
\end{align}
So we have 
\begin{align}
\Delta(\rho,\widetilde{\rho})&=1-\left[(1-\varepsilon)+\frac{\varepsilon}{N} \langle\psi_a|\mathbb{I}_N|\psi_a\rangle\right] \notag\\
&=\frac{N-1}{N}\varepsilon
\end{align}
and similarly,
\begin{align}
\Delta(\rho,\rho')&=1-\left(1-\frac{M}{N}\varepsilon\right)-\frac{\varepsilon}{N}\langle\psi_a|\sum_{i=1}^{M}|s_i\rangle\langle s_i|\psi_a\rangle +O(\varepsilon^2)\notag\\
&=\frac{M-1}{N}\varepsilon+O(\varepsilon^2).
\end{align}
When $N\gg0$ and $\varepsilon\ll0$, we have 
\begin{equation}
\frac{\Delta\left(\rho,\rho'\right)}{\Delta\left(\rho,\widetilde{\rho}\right)}=\frac{M-1}{N-1}+O\left(\varepsilon^2\right)\approx M/N.
\end{equation}
The total number of qubits being measured during compression is at the order of $O(\log N/M)=O\left(\log \Delta(\rho,\widetilde{\rho})/\Delta(\rho,\rho')\right)$. In other words, the error of the corrected state $\rho'$ reduces exponentially with the number of qubits being measured. This is the result quoted in the main text.

\section{Quantum neural network and neural-network-based autoencoders}\label{app:nn}
In the framework proposed in Refs.~\cite{Bondarenko.19,Beer.20}, a Quantum neural network (QNN) includes a set of input qubits, hidden layer qubits, and output qubits. The initial state of the network is $\rho_\text{in}\otimes|0\rangle_{\text{hid,out}}\langle0|$, where $\rho_{\text{in}}$ is the initial quantum state of the input qubits, and $|0\rangle_{\text{hid,out}}\langle0|$ represents that all the hidden layer qubits and output qubits are initialized as $|0\rangle$. According to its definition, the output of the QNN can generally be described as 
\begin{equation}
\rho_\text{out}\equiv\text{Tr}_{\text{in,hid}}\left[ U_{\text{nn}}(\rho_\text{in}\otimes|0\rangle_{\text{hid,out}}\langle0| )U_{\text{nn}}^{\dag}\right],
\end{equation}
where $U_{\text{nn}}$ represents the QNN circuit 
(where ``nn'' stands for ``neural-network), and $\text{Tr}_{\text{in,hid}}$ is the partial trace over all input qubits and hidden layer qubits. Quantum autoencoder is a special type of QNN, whose number of output qubits is identical to the the input qubits~\cite{Bondarenko.19}. With noisy input data $\rho_{\text{in}}=\widetilde{\rho}$, one expects  the output states $\rho_{\text{out}}=\rho'_{\text{nn}}$ as close to the ideal quantum states $\rho$ as possible. 

\section{Comparison between detection-based autoencoders and neural-network-based ones}\label{app:comp}
Here we compare, in detail, the power in mitigating errors between our detection-based autoencoders and neural-network-based ones introduced in \cite{Beer.20}.
\subsection{Quantum data}
We assume the ideal states of quantum data are all pure states. Given a set of quantum data $R$, one can always find a total of $M$ linearly independent states and denote as $|\psi_1\rangle,|\psi_2\rangle,\cdots,|\psi_M\rangle$. The remaining states in $R$ can be denoted as $|\psi_a\rangle$ with $a\geqslant (M+1)$, which can always be written as the linear combination of $|\psi_a\rangle$ with $a\leqslant M$.  We denote the support subspace of $R$ as $\mathcal{S}$, which has dimension $\dim \mathcal{S}=M$. The orthogonal bases of $\mathcal{S}$ can be obtained by orthogonal decomposition of $|\psi_a\rangle$ with $1\leqslant a\leqslant M$ as follows: Firstly, we set $|S_1\rangle=|\psi_1\rangle$, and then we set
\begin{align}
|S_i\rangle=|S'_i\rangle/\sqrt{\langle S'_i|S'_i\rangle},
\end{align}
where
\begin{align}
|S'_i\rangle=|\psi_i\rangle-\sum_{a=1}^{i-1}|S_a\rangle\langle S_a|\psi_a\rangle,
\end{align}
for $i>1$. Then, $\{|S_1\rangle,|S_2\rangle,\cdots,|S_{M}\rangle\}$ forms a set of orthogonal bases of $\mathcal{S}$ (note that $|\psi_a\rangle\in\mathcal{S}$).

As described in the main text, for a given quantum data $\rho=|\psi_a\rangle\langle\psi_a|$, the deteriorated states can be generally expressed as 
\begin{align}
\widetilde{\rho}=\mathcal{E}(\rho)=(1-\varepsilon)\rho+\varepsilon\rho^{\text{err}},
\end{align}
 where $\rho^{\text{err}}$ is a density matrix representing the noise effect.

\subsection{Discussion and proof of Theorem~\ref{th:com}}
Under requirement described in Sec.~\ref{subsec:req} of the main text, we compare the best possible performance of the detection-based and neural-network-based autoencoders. Such requirement is already satisfied by our detection-based protocol, and for neural-network-based autoencoders, it is equivalent to
\begin{align}
\rho_\text{out}&=\text{Tr}_{\text{in,hid}}\left[ U_{\text{nn}}(|\psi_a\rangle_\text{in}\langle\psi_a|\otimes|0\rangle_{\text{hid,out}}\langle0| )U_{\text{nn}}^{\dag}\right]\notag\\
&=|\psi_a\rangle_{\text{out}}\langle\psi_a|,\label{eq:pr}
\end{align}
for an arbitrary $|\psi_a\rangle\in R$. Here, $|\psi_a\rangle_{\text{in}}\langle\psi_a|$ and $|\psi_a\rangle_{\text{out}}\langle\psi_a|$ represent pure states of the input qubits or output qubits with density matrix $|\psi_a\rangle\langle\psi_a|$. One must be cautious that this requirement may not be true in some extraordinary situations. There, one should simply expect the average fidelities for the corrected states be as high as possible. While a general comparison in these special scenarios is challenging, we  believe that our detection-based autoencoder still shows advantage based on Theorem~\ref{th:com} and an example that does not rely on the requirement in Sec.~\ref{subsec:req} of the main text (explained in Sec.~\ref{sec:exp}).

With uncorrected state $\widetilde{\rho}$ corresponding to a quantum data $\rho=|\psi_a\rangle\langle\psi_a|$, we denote the optimal corrected states of our detection-based and neural-network-based autoencoder as $\rho'$ and $\rho'_{\text{nn}}$ respectively. Theorem~\ref{th:com} states that in most cases (especially $\varepsilon$ is small), the infidelity of $\Delta(\psi_a\rangle\langle\psi_a|,\rho')$ is lower than $\Delta(|\psi_a\rangle\langle\psi_a|,\rho'_{\text{nn}})$, and in the worst-case scenario, $\Delta(\psi_a\rangle\langle\psi_a|,\rho')$ is no larger than $\Delta(|\psi_a\rangle\langle\psi_a|,\rho'_{\text{nn}})+O(\varepsilon^2)$. 

The proof of Theorem~\ref{th:com} is as follows:

\begin{proof}
According to Eq.~\eqref{eq:pr}, for ideal input states $|\psi_a\rangle$, we have
\begin{equation}
 U_{\text{nn}}|\psi_a\rangle_\text{in}\otimes|0\rangle_{\text{hid,out}}=|\phi_a\rangle_{\text{in,hid}}\otimes|\psi_a\rangle_{\text{out}},\label{eq:e0}
\end{equation}
for certain quantum states  $|\phi_a\rangle_{\text{in,hid}}$ of the input and hidden qubits. Recalling $|\psi_1\rangle=|S_1\rangle$, $|\psi_2\rangle=x|S_1\rangle+y|S_2\rangle$ ($x$,$y$ being nonzero complex numbers), and Eq.~\eqref{eq:e0}, we have
\begin{align}\label{eq:s_1_s_2}
 U_{\text{nn}}|S_1\rangle_\text{in}\otimes|0\rangle_{\text{hid,out}}=&|\phi_1\rangle_{\text{in,hid}} \otimes |S_1\rangle_{\text{out}},\notag\\
 U_{\text{nn}}|\psi_2\rangle_\text{in}\otimes|0\rangle_{\text{hid,out}}=&|\phi_2\rangle_{\text{in,hid}} \otimes (x|S_1\rangle+y|S_2\rangle_{\text{out}}),\notag\\  
U_{\text{nn}}|\psi_2\rangle_\text{in}\otimes|0\rangle_{\text{hid,out}}=&x|\phi_1\rangle_{\text{in,hid}}\otimes |S_1\rangle_{\text{out}}\notag\\
&+y\, U_\text{nn}|S_2\rangle_{\text{in}} \otimes |0\rangle_{\text{hid,out}}.
\end{align}
Combining Eq.~\eqref{eq:s_1_s_2}, we find that $U_\text{nn}|S_2\rangle_\text{in}\otimes|0\rangle_{\text{hid,out}}=|\phi_2\rangle_{\text{in,hid}}\otimes|S_2\rangle_\text{in}$ and $|\phi_2\rangle=|\phi_1\rangle$. With the same arguments for $|\psi_i\rangle$, it can be derived that $U_\text{nn}|S_i\rangle_\text{in}\otimes|0\rangle_{\text{hid,out}}=|\phi_i\rangle_{\text{in,hid}}\otimes|S_i\rangle_\text{in}$ and $|\phi_i\rangle=|\phi_1\rangle$ for arbitrary $i$. So we can define $|\phi\rangle\equiv|\phi_i\rangle$, and have
\begin{align}
U_\text{nn}|S_i\rangle_\text{in}\otimes|0\rangle_{\text{hid,out}}=|\phi\rangle_{\text{in,hid}}\otimes|S_i\rangle_\text{in}.\label{eq:phi}
\end{align}
The orthogonal basis of  $\mathcal{H}$ can be chosen as $\{|S_1\rangle,|S_2\rangle,\cdots,|S_M\rangle,|T_1\rangle,|T_2\rangle,\cdots, |T_{N-M}\rangle\}$ for certain $|T_j\rangle\in\mathcal{H}$ satisfying $\langle S_i|T_j\rangle=0$ and $\langle T_i|T_j\rangle=\delta_{i,j}$.
We also denote $|\Phi_j\rangle_{\text{in,hid,out}}$ as the full output quantum states with input $|T_j\rangle$:
\begin{equation}
 |\Phi_j\rangle_{\text{in,hid,out}}\equiv U_{\text{nn}}|T_j\rangle_\text{in}\otimes|0\rangle_{\text{hid,out}}.
\end{equation}
 For arbitrary $|T_j\rangle$, $|T_{j'}\rangle$ and $|S_i\rangle$, we have 
\begin{subequations}\label{eq:zr}
\begin{align}
&_{\text{in,hid,out}}\langle\Phi_j|\left(|\phi\rangle_{\text{in,hid}} \otimes |S_i\rangle_{\text{out}}\right)=0,\\
&_{\text{in,hid,out}}\langle\Phi_j|\Phi_{j'}\rangle_{\text{in,hid,out}}=\delta_{j,j'}.
\end{align}
\end{subequations}
 We separate $\rho^{\text{err}}$ into three terms
\begin{align}
\rho^{\text{err}}=\Lambda_s^{\text{err}}+\Lambda_{st}^{\text{err}}+\Lambda_{t}^{\text{err}},
\end{align}
where $\Lambda_s^{\text{err}}$ and $\Lambda_t^{\text{err}}$ are the projections of $\rho^{\text{err}}$ to $\mathcal{S}$ and the space spanned by $\{|T_i\rangle\}$, and $\Lambda_{st}^{\text{err}}$ includes the block off-diagonal terms. They can be written as
\begin{subequations}\label{eq:sub_lambda}
\begin{align}
&\Lambda_s^{\text{err}}=\sum_{i,i'=1}^{M}\alpha_{i,i'}|S_i\rangle\langle S_{i'}|,\\
&\Lambda_{st}^{\text{err}}=\sum_{i=1}^{M}\sum_{j=1}^{N-M}\left(\beta_{i,j}|S_i\rangle\langle T_j|+\beta^{*}_{i,j}|T_j\rangle\langle S_i|\right),\\
&\Lambda_{t}^{\text{err}}=\sum_{j,j'=1}^{N-M}\gamma_{j,j'}|T_j\rangle\langle T_{j'}|.
\end{align}
\end{subequations}
 With the input state $\widetilde{\rho}$, the full output quantum state of the quantum autoencoder circuit is 
\begin{align}
 \widetilde{\Gamma}\equiv& U_{\text{nn}}(\widetilde{\rho}\otimes|0\rangle_{\text{hid,out}}\langle0| )U_{\text{nn}}^{\dag}\notag\\
 =&(1-\varepsilon)|\phi\rangle_{\text{in,hid}}\langle\phi|\otimes|\psi_a\rangle\langle\psi_a| \\
 &+\varepsilon U_{\text{nn}}(\rho^{\text{err}}\otimes|0\rangle_{\text{in,hid}}\langle0| )U_{\text{nn}}^{\dag}\notag\\
=&(1-\varepsilon)|\phi\rangle_{\text{in,hid}}\langle\phi|\otimes|\psi_a\rangle\langle\psi_a|\\
&+\varepsilon(\Gamma^{\text{err}}_s+\Gamma^{\text{err}}_{st}+\Gamma^{\text{err}}_t),
\end{align}
where
\begin{subequations} 
\begin{align}
\Gamma^{\text{err}}_s&\equiv U_{\text{nn}}(\Lambda_{s}^{\text{err}}\otimes|0\rangle_{\text{hid,out}}\langle0| )U_{\text{nn}}^{\dag},\label{eq:gs_1}\\
\Gamma^{\text{err}}_{st}&\equiv U_{\text{nn}}(\Lambda_{st}^{\text{err}}\otimes|0\rangle_{\text{hid,out}}\langle0| )U_{\text{nn}}^{\dag},\\
\Gamma^{\text{err}}_t&\equiv U_{\text{nn}}(\Lambda_{t}^{\text{err}}\otimes|0\rangle_{\text{hid,out}}\langle0| )U_{\text{nn}}^{\dag}.
\end{align}
\end{subequations}
According to Eq.~\eqref{eq:phi}, $\Gamma^{\text{err}}_s$ can always be rewritten as $\Gamma^{\text{err}}_s=|\phi\rangle_{\text{in,hid}}\langle\phi|\otimes\Lambda_{s}^{\text{err}}$. Moreover, we have
\begin{subequations}
\begin{align}
\Gamma^{\text{err}}_{st}=&U_\text{nn}\left[\sum_{i,j=1}^{M,N-M}\beta_{i,j}(|S_i\rangle\otimes|0\rangle_{\text{hid,out}})(\langle T_j|\otimes\langle0|_{\text{hid,out}})\right]U_{\text{nn}}^\dag+\mathrm{H.c.}\\
=&\left[\sum_{i,j}\beta_{i,j}(|\phi\rangle_{\text{in,hid}}\otimes|S_i\rangle)(\langle\Phi_j|_{\text{in,hid,out}})\right]+\mathrm{H.c.}\\
\Gamma_t^{\text{err}}=&U_\text{nn}\left[\sum_{j,j'=1}^{N-M}\gamma_{j,j'}(|T_j\rangle\otimes|0\rangle_{\text{hid,out}})(\langle T_{j'}|\otimes\langle0|_{\text{hid,out}})\right]U_{\text{nn}}^\dag\notag\\
=&\sum_{j,j'=1}^{N-M}\gamma_{j,j'}|\Phi_j\rangle_{\text{in,hid,out}}\langle\Phi_{j'}|.
\end{align}
\end{subequations}
It can then be verified that $\widehat\Gamma_t^{\text{err}}\equiv\Gamma_t^{\text{err}}/\text{Tr}(\Gamma_t^{\text{err}})$ is a density matrix.

The final output of the neural-network-based autoencoder is the partial trace of $\widetilde{\Gamma}$ over all input and hidden layer qubits
\begin{align}
\rho'_{\text{nn}}&\equiv\text{Tr}_{\text{in,hid}}\left(\widetilde{\Gamma}\right)\notag\\
&=(1-\varepsilon)|\psi_a\rangle\langle\psi_a|+\varepsilon\left[\Lambda_s^{\text{err}} +\text{Tr}_{\text{in,hid}}\left(\Gamma^{\text{err}}_t\right)+\text{Tr}_{\text{in,hid}}\left(\Gamma^{\text{err}}_{st}\right)\right]\notag\\
&=(1-\varepsilon)|\psi_a\rangle\langle\psi_a|+\varepsilon\left[\Lambda_s^{\text{err}} +\text{Tr}(\Gamma_t^{\text{err}})\cdot\text{Tr}_{\text{in,hid}}\left(\widehat\rho^{\text{err}}_t\right)+\text{Tr}_{\text{in,hid}}\left(\Gamma^{\text{err}}_{st}\right)\right]\notag\\
&=(1-\varepsilon)|\psi_a\rangle\langle\psi_a|+\varepsilon\left[\text{Tr}(\Lambda_s^{\text{err}}) \cdot\widehat{\rho}_s^{\text{err}} +\text{Tr}(\Gamma_t^{\text{err}})\cdot\text{Tr}_{\text{in,hid}}\left(\widehat\rho^{\text{err}}_t\right)+\text{Tr}_{\text{in,hid}}\left(\Gamma^{\text{err}}_{st}\right)\right],\label{eq:out'}
\end{align}
where $\widehat\rho_s^{\text{err}}\equiv\Lambda_s^{\text{err}}/\text{Tr}\left(\Lambda_s^{\text{err}}\right)$ is a density matrix. According to Eq.~\eqref{eq:zr}, the last term of Eq.~\eqref{eq:out'} can be written as $\text{Tr}_{\text{in,hid}}\left(\Gamma^{\text{err}}_{st}\right)=\sum_{i=1}^{M}\sum_{j=1}^{N-M}\beta'_{i,j}|S_i\rangle\langle T_j|$ for certain values of $\beta'_{i,j}$, whose trace is zero. Therefore, we have $\text{Tr}(\Lambda_s^{\text{err}}) + \text{Tr}(\Gamma_t^{\text{err}})=1$ and  $\langle\psi| \text{Tr}_{\text{in,hid}}(\Gamma^{\text{err}}_{st}) |\psi\rangle=0$. The infidelity of the final output can be calculated as 
\begin{align}
\Delta(|\psi_a\rangle\langle\psi_a|,\rho'_{\text{nn}})&=1-\langle\psi_a|\rho'_{\text{nn}}|\psi_a\rangle\notag\\
&=\varepsilon [1 -  \text{Tr}(\Lambda_s^{\text{err}})\cdot\langle\psi_a|\widehat{\rho}_s^{\text{err}}|\psi_a\rangle-\text{Tr}(\Gamma_t^{\text{err}})\cdot\langle\psi_a|\text{Tr}_{\text{in,hid}}\left(\widehat\rho^{\text{err}}_t\right)|\psi_a\rangle ].\label{eq:inf_1}
\end{align}
Because $\widehat\Gamma_t^{\text{err}}$ is a density matrix, we have $\langle\psi_a|\text{Tr}_{\text{in,hid}}\left(\widehat\Gamma^{\text{err}}_t\right)|\psi_a\rangle\leqslant1$. So we have
\begin{align}
\Delta(|\psi\rangle\langle\psi|,\rho'_{\text{nn}})&\geqslant\varepsilon\left[1 -  \text{Tr}(\Lambda_s^{\text{err}})\cdot\langle\psi_a|\widehat{\rho}_s^{\text{err}}|\psi_a\rangle -\text{Tr}(\Gamma_t^{\text{err}})\right]\notag\\
&=\varepsilon\text{Tr}(\Lambda_s^{\text{err}})\left[1 -  \langle\psi_a|\widehat{\rho}_s^{\text{err}}|\psi_a\rangle \right]\notag\\
&= \varepsilon\text{Tr}(\Lambda_s^{\text{err}})\Delta(|\psi_a\rangle\langle\psi_a|,\widehat\rho^{\text{err}}_s).
 \end{align}
On the other hand, for our error-detection-based protocol, the corrected state is 
\begin{align}
\rho'&=\frac{(1-\varepsilon)|\phi_a\rangle\langle\phi_a|+\varepsilon \Lambda_s^{\text{err}}}{1-\varepsilon+\varepsilon\text{Tr}(\Lambda_s^{\text{err}})}\notag\\
&=\left[1-\varepsilon\text{Tr}(\Lambda^{\text{err}}_s)+O(\varepsilon^2)\right]|\phi_a\rangle\langle\phi_a|+\left[\varepsilon+O(\varepsilon^2)\right] \Lambda_s^{\text{err}},
\end{align}
and the corresponding infidelity is
\begin{align}
\Delta\left(|\phi_a\rangle\langle\phi_a|,\rho'\right)=\varepsilon\text{Tr}(\Lambda^{\text{err}}_s)\Delta(|\psi_a\rangle\langle\psi_a|,\widehat\rho^{\text{err}}_s)+O(\varepsilon^2).
\end{align} 
Therefore, we have 
\begin{align}
\Delta(|\psi_a\rangle\langle\psi_a|,\rho')\leqslant \Delta(|\psi_a\rangle\langle\psi_a|,\rho'_{\text{nn}})+O(\varepsilon^2).
\end{align} 
\end{proof}

\subsection{An example involving leakage error}\label{sec:exp}

The quantum data we consider is a set of pure states in the form $R=\left\{|i\rangle\langle i|\right\}$ with $i=1,2,\ldots$, and the leaked state is
\begin{equation}
\widetilde{\rho}=\mathcal{E}(|i\rangle\langle i|)=(1-\varepsilon)|i\rangle\langle i|+\varepsilon|0\rangle\langle0|.
\end{equation}
where $\langle i|j\rangle=\delta_{i,j}$. The support subspace of $R$ is simply spanned by all its elements. It can  be easily verified that in our detection-based autoencoder, the ideal encoding unitary is nothing but an identity operator $U_\text{e}=\mathbb{I}_{N}$, and the projection on $|0\rangle\langle0|$ can eliminate all error terms. So we have $\rho'=|i\rangle\langle i|$ and therefore $\Delta(|i\rangle\langle i|,\rho')=0$ for all $|i\rangle\langle i|\in R$. 

In the neural-network-based autoencoder, however, we have $\rho'_\text{nn}=\text{Tr}_{\text{in,hid}}\left(U_\text{nn} \widetilde\rho\otimes|0\rangle_\text{hid,out}\langle 0|  U_\text{nn}^\dag   \right)=(1-\varepsilon)\rho^{\text{ideal}}_{\text{out}}+\varepsilon\rho_\text{out}^{\text{err}}$ where $\rho^{\text{ideal}}_{\text{out}}=\text{Tr}_{\text{in,hid}}\left(U_\text{nn} \rho\otimes|0\rangle_\text{hid,out}\langle 0|  U_\text{nn}^\dag   \right)$ and $\rho^{\text{err}}_{\text{out}}=\text{Tr}_{\text{in,hid}}\left(U_\text{nn} |0\rangle\langle0|\otimes|0\rangle_\text{hid,out}\langle 0|  U_\text{nn}^\dag  \right)$. Note that $\rho^{\text{err}}_{\text{out}}$ is a constant density matrix for arbitrary input, and the average fidelity (over all states in $R$) between it and $|i\rangle$ is $\overline{\langle i|\rho^{\text{err}}_{\text{out}}|i\rangle}=1/N$. When $N$ is sufficiently large, we have $\lim_{N\rightarrow\infty}\overline{\langle i|\rho^{\text{err}}_{\text{out}}|i\rangle}=0$. In this case, the average infidelity over all $|i\rangle\langle i|\in R$ is 
\begin{equation}
\overline{\Delta}(|i\rangle\langle i|,\rho'_{\text{nn}})= 1-\overline{\langle i|(1-\varepsilon)\rho^{\text{ideal}}_{\text{out}}|i\rangle}\geqslant\varepsilon.\label{eq:no}
\end{equation}
 Because $\Delta(|i\rangle\langle i|,\widetilde{\rho})=\varepsilon$, Eq.~\eqref{eq:no} implies that the neural-network-based autoencoder cannot offer any improvement. Note that the result in Eq.~\eqref{eq:no} does not depend on the the requirement stated in Sec.~\ref{subsec:req}, i.e. Eq.~\eqref{eq:pr}.
 
 We also performed numerical simulations on the autoencoder that mitigates the leakage error, and results are shown in Fig.~\ref{fig:leak}. We set $N=4$,  choose the ansatz as 
 \begin{equation}
 U_\text{e}=\prod_{i,j=0}^{4} e^{i\alpha_{i,j} |i\rangle\langle j|},
 \end{equation}
 and set the projection to the latent subspace $C_p=\text{Tr}\left(M_{L}\sigma\right)$ as the cost function. The projection operator to the latent subspace $M_L$ is chosen as $\mathbb{I}_{5}-|0\rangle\langle0|$ (corresponding to $U_\text{e}=\mathbb{I}_5$) for Fig.~\ref{fig:leak}(a) and $\mathbb{I}_{5}-|4\rangle\langle4|$ for Fig.~\ref{fig:leak}(b). Other parameters are the same as those used for $W$ class states. As can be seen, the autoencoder can mitigate almost all errors after training for different projection operators.

However, we note that the performance of the autoencoder depends on the choice of measurement and variational ansatz. Error-mitigation may fail when $U_\text{e}$ cannot separate the ideal states and error terms into the latent and junk subspaces. For example, if one chooses the ansatz 
  \begin{equation}
 U^{\text{mismatch}}_\text{e}=\prod_{i,j=1}^{4} e^{i\alpha_{i,j} |i\rangle\langle j|}
 \end{equation} 
and latent subspace 
  \begin{equation}
M^{\text{mismatch}}_\text{L}=\mathbb{I}_5-|4\rangle\langle4|,
 \end{equation}  
any improvement on the infidelity is impossible. This is because the error term $|0\rangle\langle0|$ is decoupled from other levels, i.e.~for all quantum data $|i\rangle\langle i|$ with $i\neq0$, $U^{\text{mismatch}}_\text{e}$ is incapable to transform the error to the junk subspace, therefore the fidelity $F<1-\varepsilon$. For a more general discussion on measurement, see Sec.~\ref{sec:m}.
\begin{figure}[t]
\includegraphics[width=0.65\columnwidth]{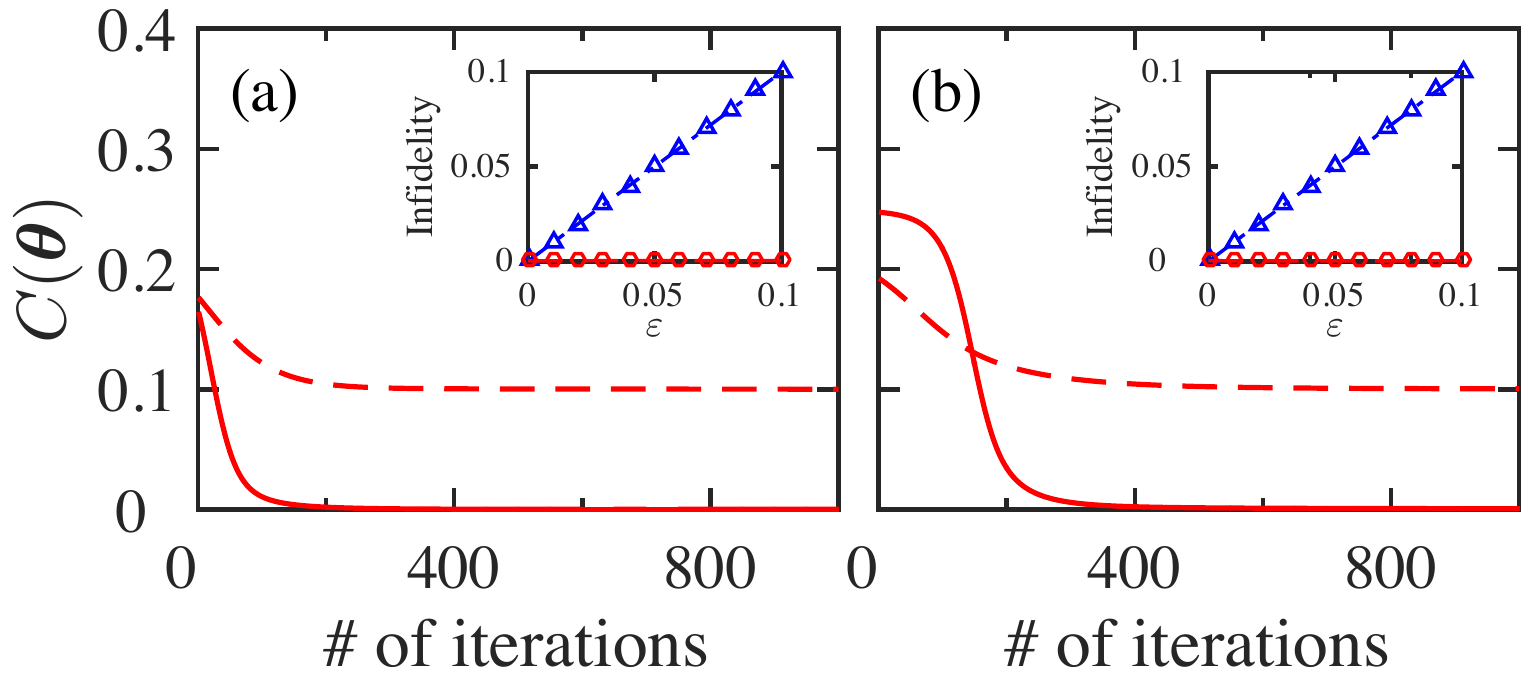}
\caption{Main panels: training curves of an autoencoder correcting leakage error for (a) $M_L=\mathbb{I}_{5}-|0\rangle\langle0|$ and (b) $M_L=\mathbb{I}_{5}-|4\rangle\langle4|$. Solid lines: $\varepsilon=0$, dashed lines: $\varepsilon=0.1$. Insets: infidelity versus error rate $\varepsilon$. Red circles: error-mitigated data, blue triangles: uncorrected data $\tilde{\rho}$.   }
\label{fig:leak}\end{figure}

\subsection{Neural-network-based autoencoder with post-selection}\label{app:nn+pos}
In this section we show that together with post-selection, which is the key of our work,  a neural-network-based autoencoder can perform comparably to our method.  
With measurement and post-selection applied within the procedure of neural-network-based autoencoder, the output state can generally be described as
 \begin{equation}
\rho_\text{out,1}=\Lambda_\text{out,1}/\text{Tr}(\Lambda_\text{out,1}),\label{eq:dp1}
\end{equation}
where
 \begin{align}
&\Lambda_\text{out,1}=\notag\\
&\text{Tr}_{\text{in,hid}}\left[ V_{\text{nn}} M_{\text{nn}} U_{\text{nn}}(\rho_\text{in}\otimes|0\rangle_{\text{hid,out}}\langle0| )U_{\text{nn}}^{\dag}M_{\text{nn}}^{\dag}V_{\text{nn}}^{\dag} \right].\label{eq:dp2}
\end{align}
Here $M_{\text{nn}}M_{\text{nn}}^\dag\leqslant\mathcal{I}_{\text{in,hid,out}}$ is a projection operator, and $U_{\text{nn}}, V_{\text{nn}}$ are unitaries. On the other hand, the output of the shallow circuits introduced in the main text of our paper is described as 
 \begin{equation}
\rho_\text{out,2}=\Lambda_\text{out,2}/\text{Tr}(\Lambda_\text{out,2}),\label{eq:sh1}
\end{equation}
where
\begin{equation}
\Lambda_{\text{out,2}}= U_{\text{e}}^{\dag}M U_{\text{e}}\rho_\text{in}U_{\text{e}}^{\dag}M^{\dag}U_{\text{e}}.\label{eq:sh2}
\end{equation}
 The network described in Eq.~\eqref{eq:dp1} and Eq.~\eqref{eq:dp2} has a comparable error-mitigation power to the shallow one [Eqs.~\eqref{eq:sh1} and \eqref{eq:sh2}], as it can always represent the shallow circuit:
\begin{theorem}\label{th:xxxx}
Given a shallow autoencoder in Eq.~\eqref{eq:sh1} and Eq.~\eqref{eq:sh2}, there always exists a neural-network-based autoencoder described in Eq.~\eqref{eq:dp1} and Eq.~\eqref{eq:dp2}, such that $\rho_\emph{out,1}=\rho_\emph{out,2}$.
\end{theorem}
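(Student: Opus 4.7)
My plan is to give a constructive proof: starting from the shallow autoencoder data $(U_{\text{e}},M)$ that defines Eqs.~\eqref{eq:sh1}--\eqref{eq:sh2}, I will write down explicit choices of $U_{\text{nn}}$, $M_{\text{nn}}$, and $V_{\text{nn}}$ for the neural-network-based protocol and then verify by direct computation that the resulting $\rho_{\text{out,1}}$ matches $\rho_{\text{out,2}}$. The key observation that makes this feasible is that an autoencoder has equal numbers of input and output qubits~\cite{Bondarenko.19}, so the input register can be swapped coherently onto the (initially zeroed) output register, after which all of the shallow-protocol manipulations can be performed on the output register while the ancillary input and hidden registers remain in a product $|0\rangle\langle 0|$ state and drop out of the partial trace.

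The construction I would use has three stages, in the order required by Eq.~\eqref{eq:dp2}. First, I would set
\begin{equation}
U_{\text{nn}}=\bigl(\mathbb{I}_{\text{in}}\otimes\mathbb{I}_{\text{hid}}\otimes U_{\text{e}}\bigr)\,\mathrm{SWAP}_{\text{in},\text{out}},
\end{equation}
so that after its action the total state is $|0\rangle_{\text{in,hid}}\langle 0|\otimes U_{\text{e}}\rho_{\text{in}}U_{\text{e}}^{\dag}$. Second, I would take $M_{\text{nn}}=\mathbb{I}_{\text{in}}\otimes\mathbb{I}_{\text{hid}}\otimes M$, which implements the shallow detection measurement on the output register only, producing $|0\rangle_{\text{in,hid}}\langle 0|\otimes MU_{\text{e}}\rho_{\text{in}}U_{\text{e}}^{\dag}M^{\dag}$. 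Third, I would choose $V_{\text{nn}}=\mathbb{I}_{\text{in}}\otimes\mathbb{I}_{\text{hid}}\otimes U_{\text{e}}^{\dag}$ to perform decoding on the output qubits. Since the input and hidden registers remain in the product state $|0\rangle\langle 0|$ throughout, tracing them out in Eq.~\eqref{eq:dp2} yields exactly $\Lambda_{\text{out,1}}=U_{\text{e}}^{\dag}MU_{\text{e}}\rho_{\text{in}}U_{\text{e}}^{\dag}M^{\dag}U_{\text{e}}=\Lambda_{\text{out,2}}$.

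I do not anticipate a substantive obstacle: the argument is essentially a relabelling. The one point to check carefully is that normalization is compatible with post-selection in both frameworks. Because the input and hidden registers contribute a factor of one to the trace at every stage, $\mathrm{Tr}(\Lambda_{\text{out,1}})=\mathrm{Tr}(\Lambda_{\text{out,2}})$, so dividing by the respective traces yields $\rho_{\text{out,1}}=\rho_{\text{out,2}}$ and the acceptance probabilities agree as well. A secondary bookkeeping point is that $M_{\text{nn}}$ as defined inherits the bound $M_{\text{nn}}M_{\text{nn}}^{\dag}\leqslant\mathbb{I}$ directly from $MM^{\dag}\leqslant\mathbb{I}$, so the triple $(U_{\text{nn}},M_{\text{nn}},V_{\text{nn}})$ is a legal instance of the framework in Eqs.~\eqref{eq:dp1}--\eqref{eq:dp2}, which completes the plan.
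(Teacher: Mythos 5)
Your construction is correct and is essentially the same as the paper's: both proofs embed the shallow circuit into the network framework by tensoring $U_{\text{e}}$, $M$, and $U_{\text{e}}^{\dag}$ with identities and using $\mathrm{SWAP}_{\text{in,out}}$ to relocate the data register, the only (immaterial) difference being that you place the SWAP at the start of $U_{\text{nn}}$ and process on the output register, whereas the paper processes on the input register and folds the SWAP into $V_{\text{nn}}$ at the end. Your explicit verification of the trace/normalization and of $M_{\text{nn}}M_{\text{nn}}^{\dag}\leqslant\mathbb{I}$ is a welcome addition to what the paper leaves as ``straightforward.''
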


\begin{proof}
We let $M_{\text{nn}}=M\otimes \mathbb{I}_{\text{in,hid}}$, $U_{\text{nn}}=U_{\text{e}}\otimes \mathbb{I}_{\text{in,hid}}$ and $V_{\text{nn}}=\left(U^{\dag}_{\text{e}}\otimes \mathbb{I}_{\text{in,hid}}\right)\text{SWAP}_{\text{in,out}}$, where $\text{SWAP}_{\text{in,out}}$ is the swap gates over each pair of qubits in input and output layers.  It is straightforward to show that $\rho_{\text{out,2}}=\rho_{\text{out,1}}$.
\end{proof}

\section{Discussion and proof of Theorem~\ref{th:opt}}\label{app:opt}

Because the neural-network-based quantum autoencoder is trace preserving, and can represent arbitrary unitaries of the system containing input, hidden layer and output qubit systems, it can in principle represent an arbitrary complete-positive-trace-preserving (CPTP) map acting on the input data $|\psi_a\rangle$. In the proof of Theorem~\ref{th:com}, one can observe that the error term in the support subspace, $\Lambda_s^{\text{err}}$, is always unchanged after applying the neural-network based quantum autoencoder. Therefore, $\Lambda_s^{\text{err}}$ is also unchanged under an arbitrary CPTP map satisfying the requirement described in Sec.~\ref{subsec:req} of the main text. Theorem~\ref{th:opt} states that the above argument is still true for general non-trace preserving quantum operations $\widehat{\mathcal{E}}(\cdot)$ in the following form:
 \begin{equation}
\widehat{\mathcal{E}}(\rho)=\sum_mE_m\rho E_m^{\dag},
\end{equation}
with $\sum_mE_mE_m^{\dag}\leqslant \mathbb{I}_N$, where $N$ is the dimension of $\rho$. The requirement described in Sec.~\ref{subsec:req} of the main text is equivalent to 
 \begin{equation}
\widehat{\mathcal{E}}(|\psi_a\rangle\langle\psi_a|)=|\psi_a\rangle\langle\psi_a|,\label{eq:npr}
\end{equation}
 for all $|\psi_a\rangle\langle\psi_a|\in R$. 
 
 The proof of Theorem~\ref{th:opt} is as follows:
 
\begin{proof}

 We introduce an environment $E$ with orthogonal basis $\{|e_0\rangle,|e_1\rangle, \cdots\}$, which is initialized in state $|e_0\rangle$. An arbitrary quantum operation can always be modelled by a unitary $U$ acting on the system-environment space and a projection operation $P$ acting on environment \cite{Nielsen.02}:
\begin{equation}
\widehat{\mathcal{E}}(\rho)=\text{Tr}_\text{E}\left[PU\left(\rho\otimes|e_0\rangle\langle e_0|\right)U^{\dag}P\right],
\end{equation}
where $\rho$ is the initial state of the system, and $\text{Tr}_\text{E}$ is the partial trace over $E$. According to Eq.~\eqref{eq:npr}, for arbitrary $|\psi_a\rangle\in R$, we have 
\begin{equation}
\text{Tr}_\text{E}\left[PU\left(|\psi_a\rangle\langle\psi_a|\otimes|0\rangle\langle0|\right)U^{\dag}P\right]=|\psi_a\rangle\langle\psi_a|,
\end{equation}
which means that 
\begin{equation}
PU\left(|\psi_a\rangle\langle\psi_a|\otimes|0\rangle\langle0|\right)U^{\dag}P=|\psi_a\rangle\langle\psi_a|\otimes\rho_{\text{env},a},\label{eq:pu}
\end{equation}
where $\rho_{\text{env},a}$ is a density matrix describing the environment satisfying $\text{Tr}[\rho_{\text{env},a}]=1$. From Eq.~\eqref{eq:pu}, we have 
\begin{equation}
U\left(|\psi_{a}\rangle\langle\psi_{a}|\otimes|0\rangle\langle0|\right)U^{\dag}=|\psi_a\rangle\langle\psi_a|\otimes|\psi _{\text{env},a}\rangle\langle\psi_{\text{env},a}|,\label{eq:up}
\end{equation}
where $|\psi _{\text{env},a}\rangle$ is a normalized state of environment. 
With the same argument as Eqs.~\eqref{eq:e0}-\eqref{eq:phi}, it can be derived that for arbitrary $|S_i\rangle$, 
\begin{equation}
U(|S_i\rangle\langle S_i|\otimes|0\rangle\langle0|)U^{\dag}=|S_i\rangle\langle S_i|\otimes|\psi _{\text{env}}\rangle\langle\psi _{\text{env}}|,\label{eq:us}
\end{equation}
is satisfied, where $|\psi _{\text{env},a}\rangle=|\psi _{\text{env}}\rangle$ for all $a$. Eq.~\eqref{eq:us} is equivalent to $U|S_i\rangle\otimes|0\rangle=|S_i\rangle\otimes|\psi _{\text{env}}\rangle$. Because $\Lambda_s^{\text{err}}$ is in the support subspace and can be represented by basis $\{|S_i\rangle\}$, we have
\begin{equation}
U(\Lambda_s^{\text{err}}\otimes|e_0\rangle\langle e_0|)U^{\dag}=\Lambda_s^{\text{err}}\otimes|\psi_\text{env}\rangle\langle \psi_\text{env}|.
\end{equation}
According to Eq.~\eqref{eq:pu}-\eqref{eq:up}, and notice that $|\psi _{\text{env},a}\rangle=|\psi _{\text{env}}\rangle$, we have
\begin{equation}
PU(\Lambda_s^{\text{err}}\otimes|e_0\rangle\langle e_0|)U^{\dag}P=\Lambda_s^{\text{err}}\otimes\rho_{\text{env}}
\end{equation}
with $\rho_{\text{env},a}=\rho_{\text{env}}$ for all $a$. Therefore, we have
\begin{equation}
\widehat{\mathcal{E}}\left(\Lambda_s^{\text{err}}\right) = \text{Tr}_E\left[PU(\Lambda_s^{\text{err}}\otimes|e_0\rangle\langle e_0|)U^{\dag}P\right]=\Lambda_s^{\text{err}}.
\end{equation}
\end{proof}

\section{Further discussion on measurement}\label{sec:m}
The performance of our detection-based quantum autoencoder depends crucially on the choice of measurements (the latent subspace) and the ansatz. A judicious choice can be made by taking into account of three considerations listed below.

Firstly, the measurement should be chosen such that the variational ansatz is easy to be trained to compress most, if not all, input states to the latent subspace. In Fig.~\ref{fig:mismatch}(a), we consider the task of compressing $4$-qubit $W$-class states into an 8-dimensional latent subspace with different measurements $M_L$. We compare the measurement $M_L=|0\rangle\langle0|\otimes \mathbb{I}_8$ as chosen in the main text (result shown as the dashed line) to 10 instances of $M_L$, which are constructed by $8$ randomly chosen orthogonal basis (shown as crosses). While the infidelities for some randomly chosen $M_L$ are comparable to results from $M_L=|0\rangle\langle0|\otimes \mathbb{I}_8$, others have higher infidelities. Here we note that in order to make a better choice of measurement, an effective method is to parametrize the measurement operator, and include the optimization of those parameters in the training process.

Secondly, the dimensionality of the latent subspace associated to the measurement should be as close as possible to that of the support for the quantum data set. On one hand, if the latent subspace is larger (under-compression), the latent subspace could include more errors so as to weaken the error-mitigation power of the autoencoder. On the other hand, if the latent subspace is smaller (over-compression), the information encoded in the original data could be lost. An example showing results of under-compression and over-compression has been given in Fig.~\ref{fig:mismatch}(b), where one can see that the infidelities for both under-compression and over-compression cases are higher than the case in which the latent subspace is of the right size. This challenge can be overcome by using multi-stage training method as mentioned in the main text. In each stage of training, the data are compressed to the latent subspace that is smaller than the one in the previous stage. A stage of compression is accepted only if the performance is improved.

\begin{figure}[t]
\includegraphics[width=0.65\columnwidth]{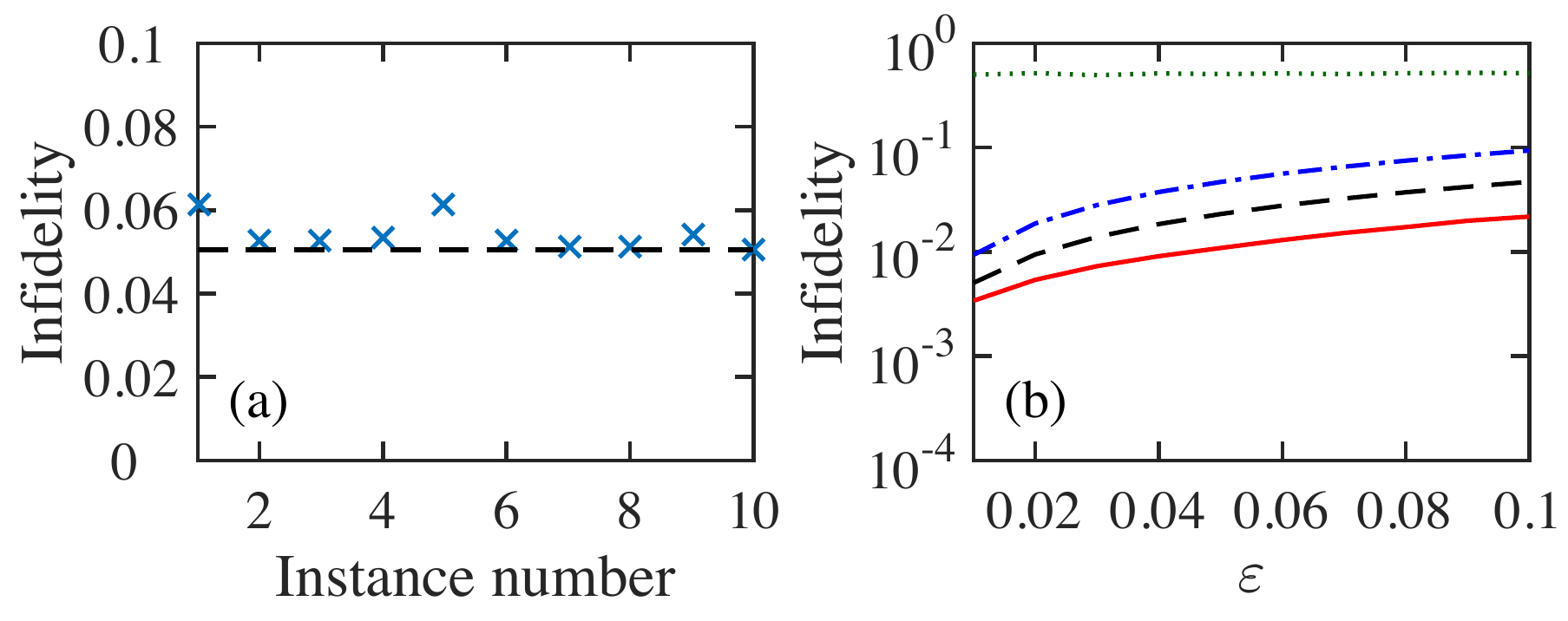}
\caption{Comparison of the error-mitigation performances using different projection measurements $M_L$, for $4$-qubit $W$-class states under global depolarization noise. (a) The crosses represent results from randomly chosen four dimensional latent subspaces, while the dashed line represents result from $M_L=|0\rangle\langle0|^{\otimes 2}\otimes \mathbb{I}_{4}$ as used in the main text. (b) Green dotted line represents $M_L=|0\rangle\langle0|^{\otimes 3}\otimes \mathbb{I}_{2}$ (over-compression), blue dash-dotted line the uncorrected quantum data, black dashed line $M_L=|0\rangle\langle0|^{\otimes 1}\otimes \mathbb{I}_{8}$ (under-compression), and red solid line  $M_L=|0\rangle\langle0|^{\otimes 2}\otimes \mathbb{I}_{4}$.  }\label{fig:mismatch}
\end{figure}

Thirdly, the measurement chosen should be easy to implement. For a qubit system, we have chosen $M_L$ such that it is the projection of the first $n-m$ qubits to  $|0\rangle$, and the compressed states is simply the tensor product of $|0\rangle\langle0|^{n-m}$ (ancilla qubit) and the quantum state of the remaining qubits. However, we note that there exist systems for which the compressed state can not be written in the form of a tensor product. An example is the continuous variable system~\cite{Weedbrook.12}. The information may be encoded in the frequency of a single photon with infinite dimension. In this case, one may define the latent subspace as the frequency lower than a certain value, and $M_L$ can be realized directly with a single long-pass filter.

\section{Supplementary results for W class states}\label{app:cm0}

\subsection{Training curves}
Examples of the training curves of the cost function $C(\bm{\theta})$ for three situations (ideal, global and local depolarization noises) are shown in Supplementary Fig.~\ref{fig:cv}. In all cases, $C(\bm{\theta})$ decreases smoothly during the training, and converges after about $2\times10^4$ iterations. We note that  under noise, $C(\bm{\theta})$ cannot reach zero. Instead, the lowest possible value is $\varepsilon \text{Tr}\left[M_J\Lambda^\text{err}\right]$, in which case one transforms the term $\Lambda_{s}^\text{err}$ fully into the junk subspace. To avoid confusion, $C(\bm{\theta})$ is not directly related to the fidelity of corrected states, because the fidelity is mainly determined by the remaining errors inside the support subspace $\Lambda_{t}^\text{err}$, while $C(\bm{\theta})$ is relevant to errors outside. 

\begin{figure}[t]
\includegraphics[width=0.6\columnwidth]{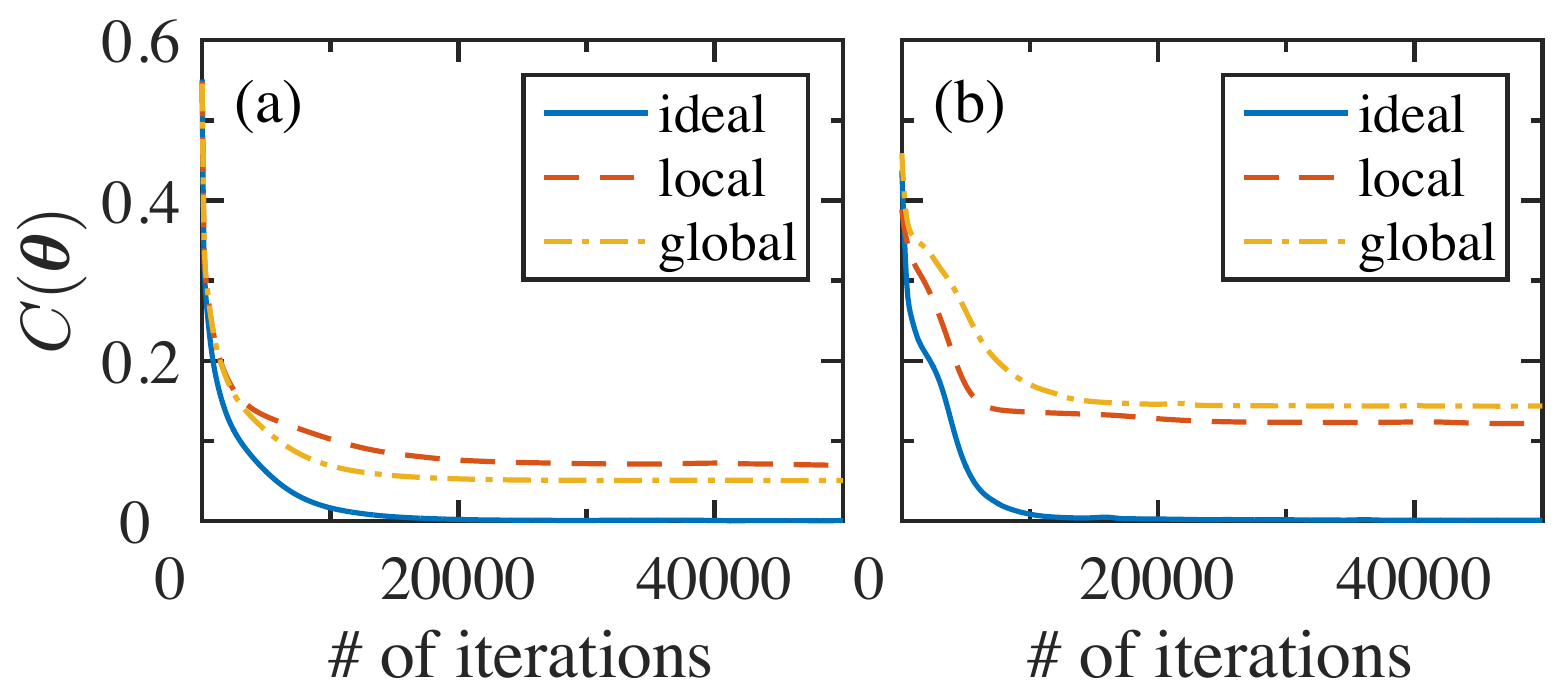}
\caption{Training curves of the cost function for projection to the junk subspace $C(\bm{\theta})$ for three different situations as indicated. Panel (a) shows results for the first stage of training, while panel (b) the second stage. Solid lines: ideal states with $\varepsilon=0$. Dash-dotted lines: global depolarization noise with $\varepsilon=0.1$. Dashed lines: local depolarization noise with $\varepsilon=0.1$.}
\label{fig:cv}
\end{figure}

\begin{figure} [h]
\includegraphics[width=0.56\columnwidth]{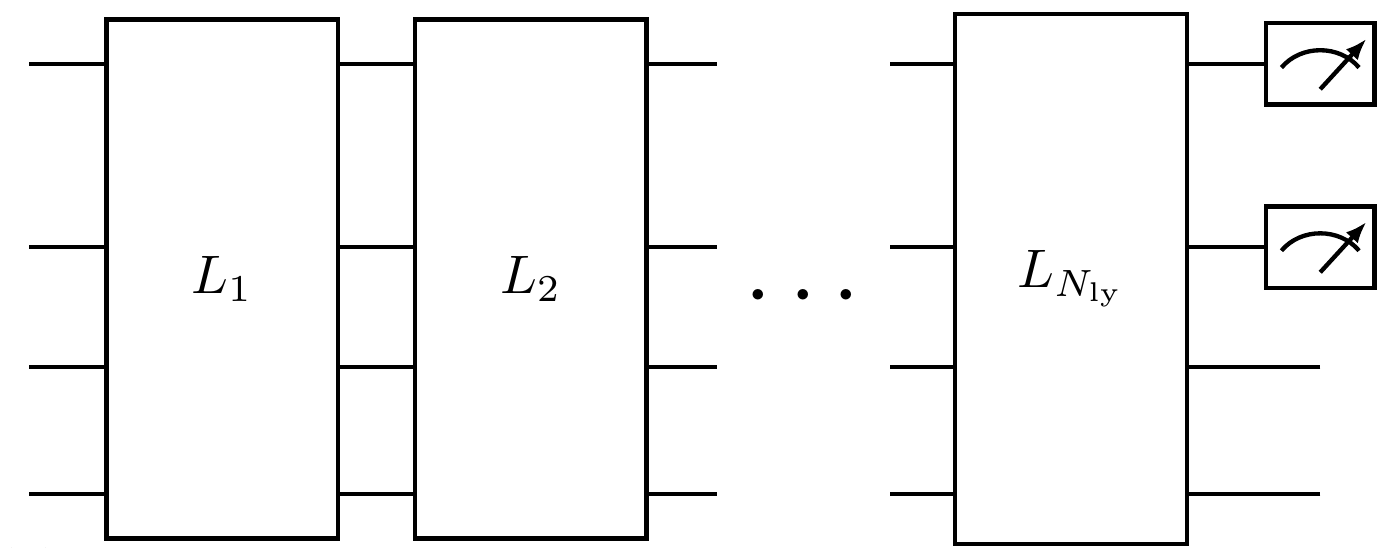}
\caption{Circuit structure for the single-stage training method.}
\label{fig:enc_sup}
\end{figure}

\begin{figure} [t]
\includegraphics[width=0.56\columnwidth]{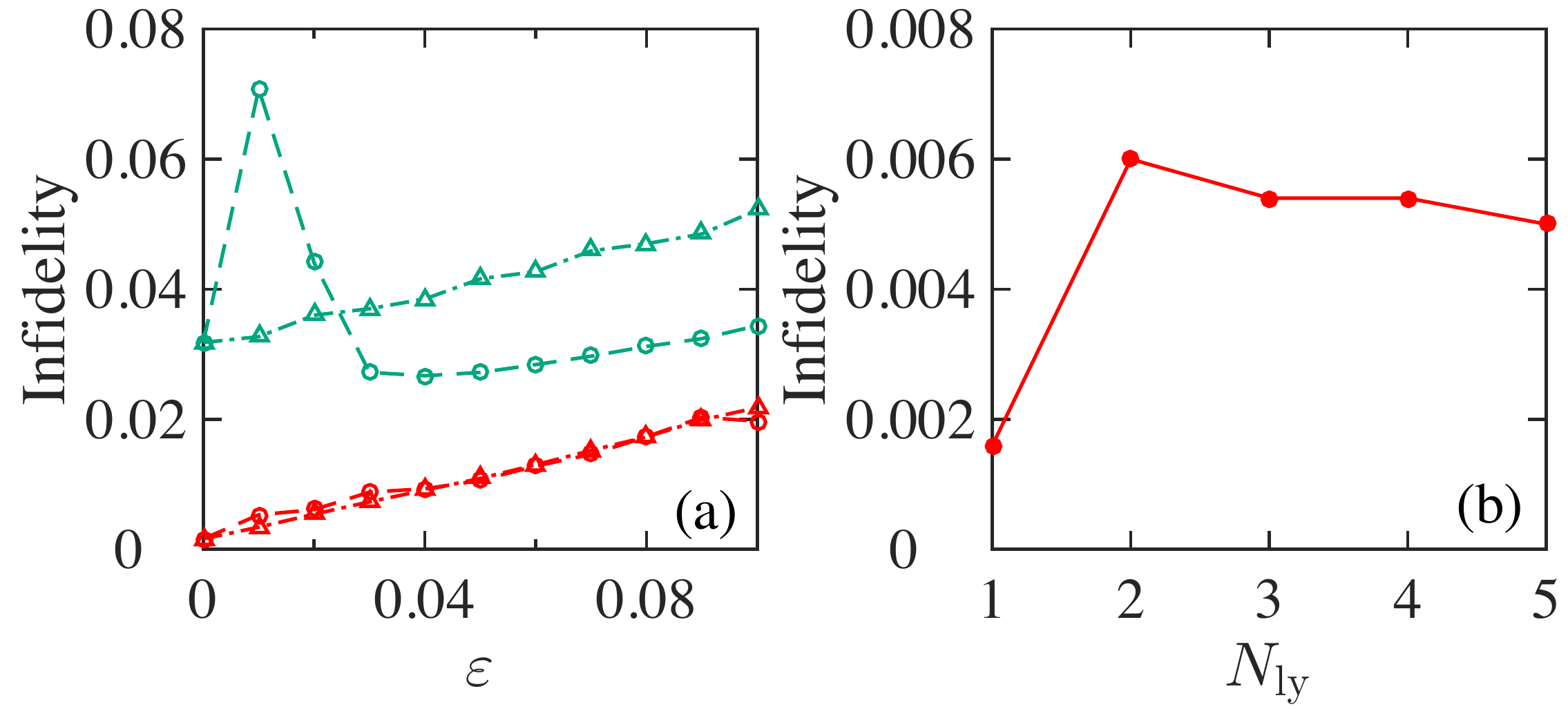}
\caption{ (a) Comparison of the performances of the two-stage method and the single-stage method. Green lines represent the single-stage method; red lines represent the two-stage circuit and training process used in the main text. Dashed lines and circles represent results with the local depolarization noise, dash-dotted lines and triangles represent those with global depolarization noise. For both methods, we set $N_{\text{ly}}=1$. (b) Performance versus number of layers $N_{\text{ly}}$. We use the two-stage method and assume no noise is present ($\varepsilon=0$).}
\label{fig:dr_ly}
\end{figure}

\subsection{Training stages and number of layers}
In the main text, we train a detection-based quantum autoencoder in two stages (``two-stage method''). In this subsection, we address two questions: (1) why the two-stage method is superior than that with a single stage in training; and (2) why $N_{\text{ly}}=1$ is optimal.

The circuit structure of a detection-based quantum autoencoder trained in a single stage (``single-stage method'') is shown in Fig.~\ref{fig:enc_sup}, with the structure of each layer identical to the two-stage method [cf. Fig.~\ref{fig:c}(b) in the main text]. Similar to the two-stage method, the circuit is trained to compress the input data to the latent subspace spanned by $\{|0\,0\,a_1a_2\rangle\}$ with $a_i=0,1$. In Fig.~\ref{fig:dr_ly}(a), the performances of the single- and two-stage methods are compared. Obviously, the two-stage method has a superior performance as it consistently produces lower infidelities without fluctuation over the parameter $\varepsilon$, while the single-stage method under local depolarization noise spikes at $\varepsilon=0.01$, indicating instabilities in the training process. This is because the smaller the subspace is, the more difficult one can compress the original states into it. Dividing the training process into multiple stages can help to avoid the optimization getting trapped in a poor local minimum.

We now restrict to the two-stage method and consider the effect of different number of layers. As shown in Fig.~\ref{fig:dr_ly}(b), the circuit with $N_{\text{ly}}=1$ has the lowest infidelity, and there is no improvement  as $N_{\text{ly}}$ increases. We conjecture that this is because the circuit with $N_{\text{ly}}=1$ is already sufficient to perform the compression, but as the number of parameters increase, the cost function is more likely to get trapped in a poor local minima during the training process. Nevertheless, we believe that for more complicated quantum data with higher dimensionality, more layers are necessary and one must carefully select the optimal $N_{\text{ly}}$ in the detection-based quantum autoencoder.

\begin{figure}[t]
\includegraphics[width=0.56\columnwidth]{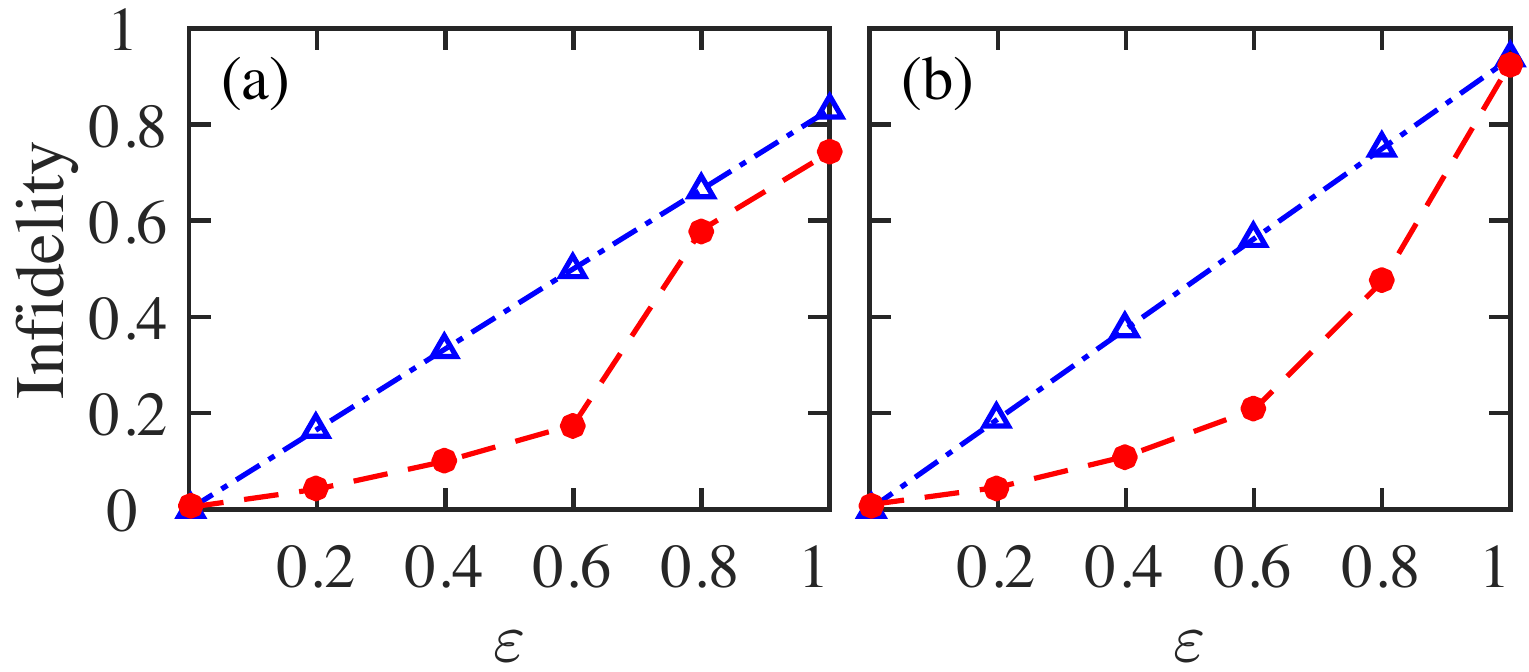}
\caption{Performance of a detection-based quantum autoencoder for a broad range of noise with (a) local depolarization noise $\widetilde{\mathcal{E}}_{\text{lc}}(\cdot)$ and (b) global depolarization noise $\widetilde{\mathcal{E}}_{\text{gl}}(\cdot)$. We set $n=4$. Blue lines with triangles represent the uncorrected states and red lines with circles represent the corrected state with one layer of autoencoder.}
\label{eq:large}
\end{figure}

\begin{figure}[t]
\includegraphics[width=0.56\columnwidth]{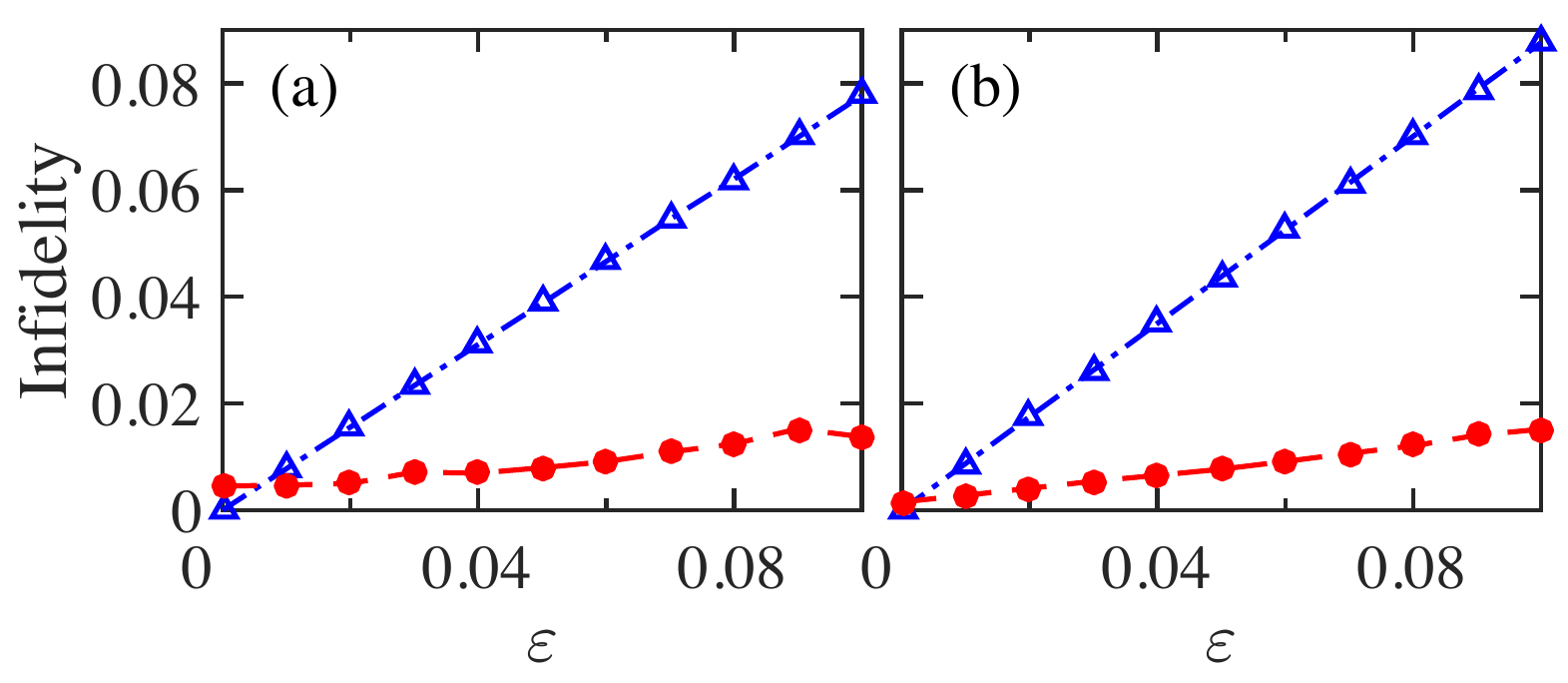}
\caption{Performance of a detection-based quantum autoencoder for mixed states with (a) local depolarization noise $\widetilde{\mathcal{E}}_{\text{lc}}(\cdot)$ and (d) global depolarization noise $\widetilde{\mathcal{E}}_{\text{gl}}(\cdot)$. We set $n=4$. Blue lines with triangles represent the uncorrected states and red lines with circles represent the corrected state with one layer of autoencoder.}
\label{fig:mixed}
\end{figure}

\begin{figure}[t]
\includegraphics[width=0.56\columnwidth]{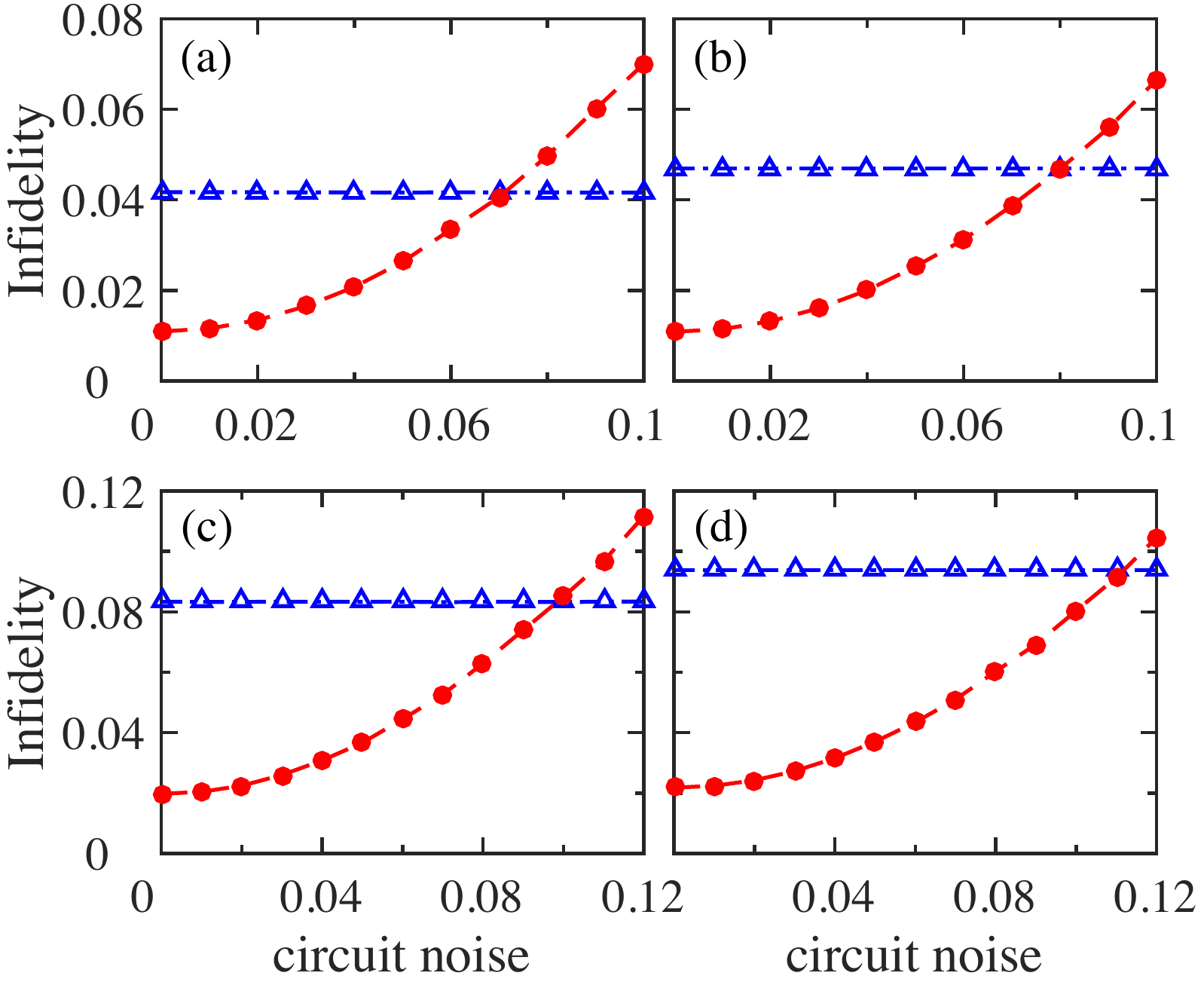}
\caption{Performance of detection-based quantum autoencoders with noisy quantum circuits. (a) local depolarization noise with $\varepsilon=0.05$; (b) global depolarization noise with $\varepsilon=0.05$; (c) local depolarization noise with $\varepsilon=0.1$; (d) global depolarization noise with $\varepsilon=0.1$. Blue lines with triangles represent the uncorrected states and red lines with circles represent the corrected state with one layer of autoencoder.}\label{fig:circuit_noise}
\end{figure}

\subsection{Large noise cases}
In this section, we consider the performance of our autoencoder under higher noises levels up to $\varepsilon=1$. The noises are applied to both training and testing phases.  As shown in Supplementary Fig.~\ref{eq:large}, the error-mitigated states shows significant improvement when $\varepsilon\leqslant0.6$ for both local and global depolarization noise, and
$\varepsilon\leqslant0.8$ for global depolarization noise. Therefore, our detection-based quantum autoencoder also works for reasonably large noises.

\subsection{Mixed states}
In Supplementary Fig.~\ref{fig:mixed}, we show results for error-mitigation on mixed states. We take a quantum autoencoder well-trained with pure-state data under the procedure explained in the main text,  then directly apply it to mixed-state data. The ideal states are $\rho_{\text{mix}}=p_1|\psi_1\rangle\langle\psi_1|+(1-p_1)|\psi_2\rangle\langle\psi_2|$, with two randomly generated pure $W$-class states $|\psi_{1,2}\rangle$ and probability $p_1\in[0,1]$ drawn from uniform distribution.  As can be seen, application of the quantum autoencoder reduces the infidelity substantially. Therefore our method should work for mixed states.

\subsection{Noisy quantum circuit}
In practice, the quantum circuit of the quantum autoencoder may not be ideal. It is therefore important to see whether the quantum autoencoder would still be able to improve fidelities with noisy quantum circuits. As an example, we consider a case where parameters in the circuits are deteriorated by noises. For each parameter of the ansatz $\theta_i$, we introduce a Gaussian noise  $\theta_i\rightarrow\theta_i+\delta\theta_i$, with $\delta\theta_i\sim\mathcal{N}(0,\varepsilon^2_{\text{circuit}})$. As can be seen in Supplementary Fig.~\ref{fig:circuit_noise}, the quantum autoencoder provide obvious improvement when the circuit noise level $\varepsilon_{\text{circuit}}$ is not high, roughly $\varepsilon_{\text{circuit}}\lesssim\varepsilon$.

\section{Results for hydrogen molecule}\label{app:cm1}

In this section we provide another example applying the detection-based quantum autoencoder.
In simulation of quantum chemistry, one typically needs to encode a molecular system, which has $N$ electrons occupying $M$ orbitals (with $M>N$), to a simulating qubit system. For example, under the Jordan-Wigner mapping, each qubit of the simulating system represents an orbital and the state $|0\rangle$ ($|1\rangle$) represents that the orbital is unoccupied (occupied). One of the simplest molecular system is the hydrogen molecule (H$_2$) whose quantum state can be described by \cite{SMcArdle.19},
\begin{equation}
\left|\psi_{\text{H}_2}\right\rangle=\alpha_1|0101\rangle+\alpha_2|1010\rangle+\alpha_3|1001\rangle+\alpha_4|0110\rangle,\label{eq:H}
\end{equation}
with certain values of $\alpha_i$.

We assume that $\left|\psi_{\text{H}_2}\right\rangle$ can be prepared for arbitrary $\alpha_i$ subject to noise. With the same programmable circuit structure and training process as in the main text for $W$ class states, we obtain a well-trained quantum autoencoder for Eq.~\eqref{eq:H}. The error-mitigation effect is shown in  Fig.~\ref{fig:H}. For both global and local depolarization noise models, when $\varepsilon>0$, the corrected  states have much lower infidelities compared to the uncorrected data, demonstrating the power of our detection-based method.

\begin{figure}[t]
\includegraphics[width=0.56\columnwidth]{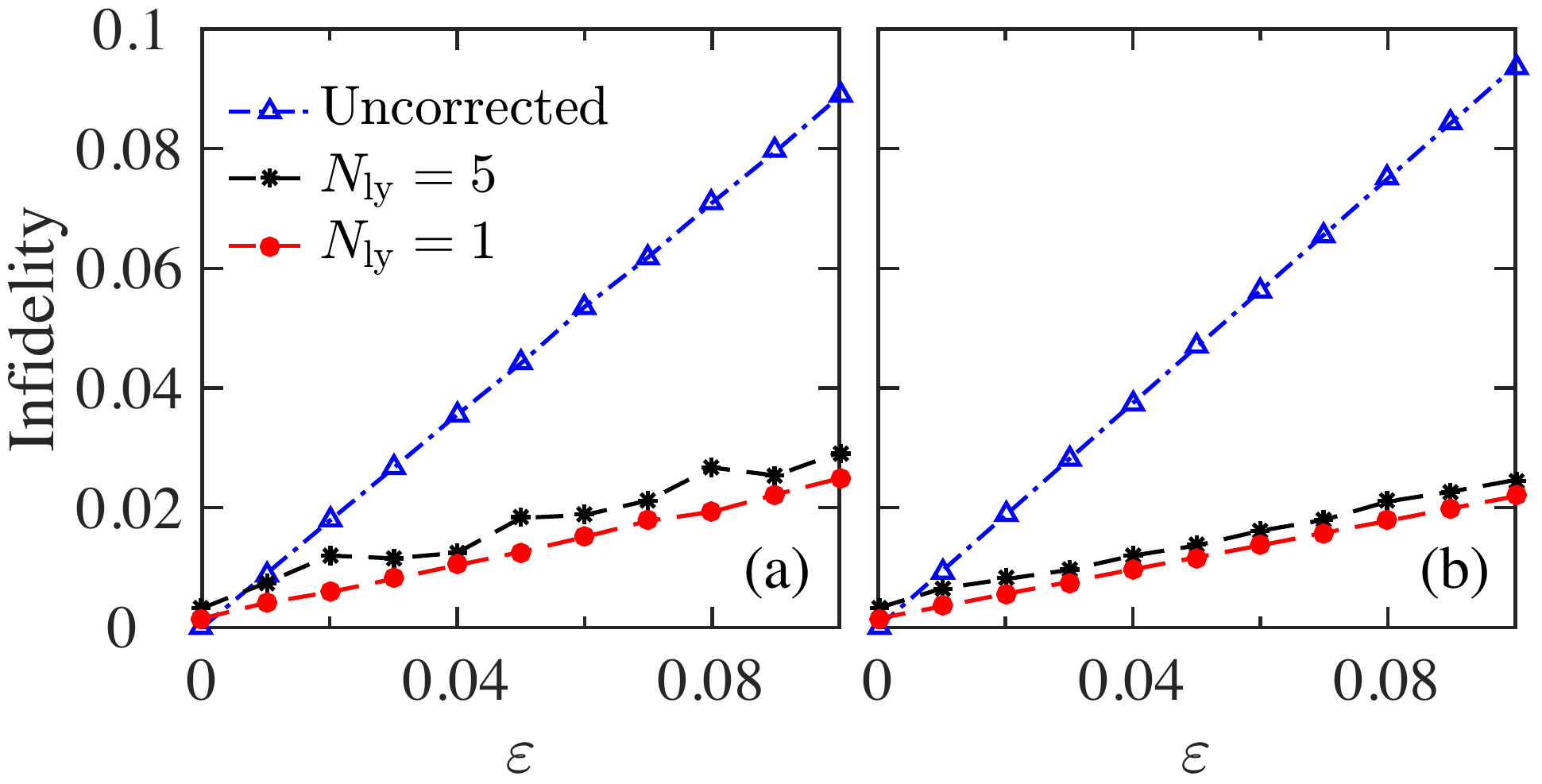}
\caption{ Performance of well-trained quantum autoencoders for quantum state described in Eq.~\eqref{eq:H} with (a) local and (d) global depolarization noise.}
\label{fig:H}
\end{figure}
\begin{figure}[t]
\includegraphics[width=0.56\columnwidth]{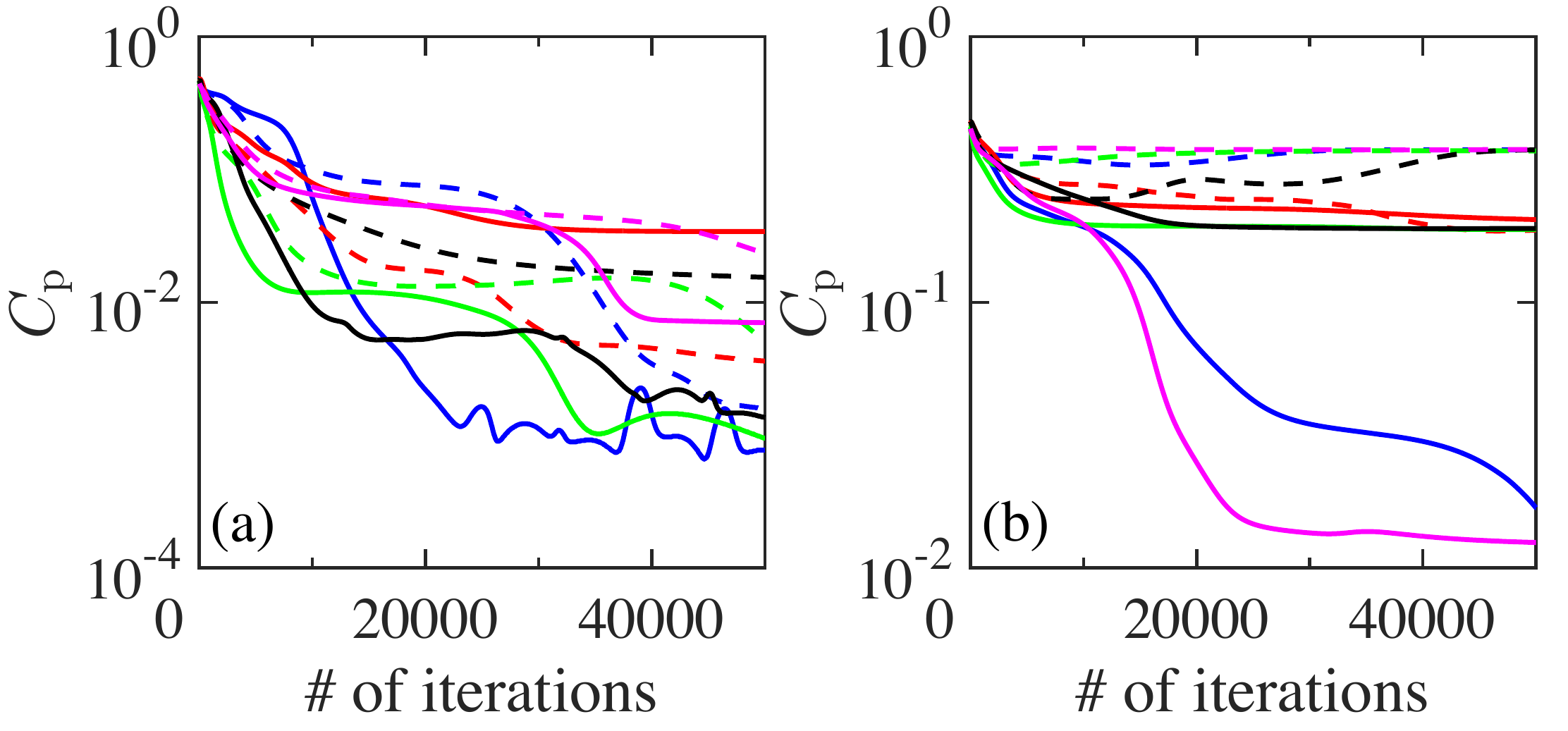}
\caption{Comparison of the training process with cost function Eq.~\eqref{eq:purity} and Eq.~\eqref{eq:cost}, which are represented with solid lines and dashed lines respectively. The $y$-axis represents the cost function $C_{\text{p}}(\bm{\theta})$. (a) The first stage of the two-stage training method. (b) The single-stage training method. Different colors represent different initial guess of $\bm{\theta}$.  }
\label{fig:purity}
\end{figure}

\section{optimization based on purity}
As mentioned in the main text, instead of determining the form of $\mathcal{L}$ \emph{a priori}, one may just fix the dimension of $\mathcal{L}$ and allow its form  to change with training. In our original protocol, the circuit is trained to project all $(n-m)$ qubits to state $|0\rangle$, representing the projection to the latent subspace. In fact, up to a local unitary, this is equivalent to projecting them to any disentangled pure states. So we may change the cost function [Eq.~\eqref{eq:cost}] to  a form relating to the purity and entanglement of the $(n-m)$-qubit system. Take the two-stage method for $n=4$ W class states as an example, in the first stage, the cost function can be changed to the purity of the first qubit:
\begin{equation}
C_{\text{p}}(\bm{\theta})=1-\text{Tr}\left[\text{Tr}_{2,3,4}\left[\sigma(\bm{\theta})\right]^2\right],\label{eq:purity}
\end{equation}
where $\text{Tr}_{2,3,4}$ represents the partial trace over the $2$nd, $3$rd and $4$th qubits. In case $C_{\text{p}}(\bm{\theta})$ is minimized to zero, the first qubit can be transferred to state $|0\rangle$ with a single qubit unitary. Note that the training process in the second stage can be similar. For the one-stage method, the four-qubit states are compressed to two-qubit states directly. To maximize the purity and minimize the entanglement of the discarded qubits, we define the cost function as one minus the average purity of the first and second qubit 
\begin{equation}
C_{\text{p}}(\bm{\theta})=1-\frac{1}{2}\text{Tr}\left[\text{Tr}_{1,3,4}\left[\sigma(\bm{\theta})\right]^2\right]-\frac{1}{2}\text{Tr}\left[\text{Tr}_{2,3,4}\left[\sigma(\bm{\theta})\right]^2\right].\label{eq:purity}
\end{equation}
In Fig.~\ref{fig:purity}, we compare the optimization process with the cost function $C(\bm{\theta})$ defined in Eq.~\eqref{eq:cost} and $C_{\text{p}}(\bm{\theta})$ with the same initial guess of $\bm{\theta}$. 
 In most cases, the optimization with $C_{\text{p}}(\bm{\theta})$ converges faster and ends up with higher final purity. Therefore, we believe that this method can help improving the training performance in future studies. Moreover, the purity cost function is also potentially useful for neural-network-based quantum autoencoder~\cite{Bondarenko.19,Beer.20}, as it can avoid non-pure output states. 
 
However, the purity cost function also has its limitation. Firstly, the quantum data are assumed to be pure, and it is not straightforward to generalize the purity cost function to accommodate mixed states.  Moreover, measuring purity is more difficult than projecting the data to the latent subspace as implemented in the main text.

\end{document}